\documentclass[letterpaper,10pt]{IEEEtran}

\usepackage{amsmath,graphicx,epsfig,color,amsfonts}

\graphicspath{{./fig/}}

\usepackage{verbatim}
\usepackage{amssymb}
\usepackage{mathtools}
\usepackage{amsthm}
\usepackage{commath}
\usepackage{nth}
\usepackage{cite}
\usepackage{subfig}
\usepackage[vlined,ruled]{algorithm2e}
\usepackage[capitalise]{cleveref}
\usepackage{xcolor}

\def\expt{\mathbb{E}}
\def\real{\mathbb{R}}
\def\integer{\mathbb{Z}}
\def\natural{\mathbb{N}}
\newcommand{\until}[1]{[#1]}
\newcommand{\subscr}[2]{#1_{\textup{#2}}}
\newcommand{\supscr}[2]{#1^{\textup{#2}}}
\newcommand{\setdef}[2]{\{#1 \; | \; #2\}}
\newcommand{\seqdef}[2]{\{#1\}_{#2}}
\newcommand{\bigsetdef}[2]{\big\{#1 \; | \; #2\big\}}

\newcommand{\union}{\operatorname{\cup}}
\newcommand{\intersection}{\ensuremath{\operatorname{\cap}}}

\newcommand{\ceil}[1]{\left\lceil #1 \right\rceil}

\DeclareMathOperator*{\argmax}{arg\,max}
\DeclareMathOperator*{\argmin}{arg\,min}
\DeclareMathOperator{\tr}{tr}

\newcommand\oprocendsymbol{\hbox{$\square$}}
\newcommand\oprocend{\relax\ifmmode\else\unskip\hfill\fi\oprocendsymbol}

\def \bs {\boldsymbol}
\def \mc {\mathcal}

\newtheorem{theorem}{Theorem}

\newtheorem{lemma}[theorem]{Lemma}

\newtheorem{remark}{Remark}

\newtheorem{assumption}{Assumption}
\newtheorem{definition}{Definition}

\usepackage{caption}
\usepackage{balance}
\title{Multi-Robot Multitask Gaussian Process \\
Estimation and Coverage
	\thanks{This work has been supported in part by NSF grant IIS-1734272 and ARO grant W911NF-18-1-0325.}
}

\author{Lai Wei, Andrew McDonald, and Vaibhav Srivastava
 \thanks{L. Wei is with the Life Sciences Institute,
University of Michigan, Ann Arbor, MI 48109 USA {\tt\small e-mail: weilaitim@gmail.com}} 
	\thanks{V. Srivastava is with the Department of Electrical and Computer Engineering. Michigan State University, East Lansing, MI 48823 USA. {\tt\small e-mail: vaibhav@msu.edu}} 
	\thanks{A. McDonald
		is with the University of Cambridge, Cambridge, UK.
		{\tt\small e-mail: arm99@cam.ac.uk}}%
}

\begin{document}

\maketitle

\newcommand{\rtwo}{\mathbb{R}^2}
\newcommand{\rone}{\mathbb{R}}
\newcommand{\qone}{\mathbb{Q}}
\newcommand{\qtwo}{\mathbb{Q}^2}
\newcommand{\nat}{\mathbb{N}}
\newcommand{\ex}{\mathbb{E}}
\newcommand{\rt}{\rightarrow}
\newcommand{\lt}{\leftarrow}
\newcommand{\T}{\top}
\newcommand{\normal}{\mathcal{N}}
\newcommand{\vor}{\mathcal{V}}
\newcommand{\st}{\mid}
\renewcommand{\abs}[1]{\left|#1\right|}
\renewcommand{\set}[1]{\left\{#1\right\}}
\renewcommand{\norm}[1]{\left\lVert #1\right\rVert}
\newcommand{\obar}[1]{\overline{#1}}
\newcommand{\cov}{\text{cov}}

\newcommand{\bvec}[1]{\bs{#1}}


\begin{abstract}%
\label{sec:abstract}%
Coverage control is essential for the optimal deployment of agents to monitor or cover areas with sensory demands. While traditional coverage involves single-task robots, increasing autonomy now enables multitask operations. This paper introduces a novel multitask coverage problem and addresses it for both the cases of known and unknown sensory demands. For known demands, we design a federated multitask coverage algorithm and establish its convergence properties. For unknown demands, we employ a multitask Gaussian Process (GP) framework to learn sensory demand functions and integrate it with the multitask coverage algorithm to develop an adaptive algorithm. We introduce a novel notion of multitask coverage regret that compares the performance of the adaptive algorithm against an oracle with prior knowledge of the demand functions. We establish that our algorithm achieves sublinear cumulative regret, and numerically illustrate its performance. 
\end{abstract}

\noindent

\begin{IEEEkeywords}
Coverage control, Multitask learning, Gaussian process, Regret analysis
\end{IEEEkeywords}


\section{Introduction}
\label{sec:introduction}
In the parlance of multi-agent systems, coverage control is a critical topic that deals with the optimal deployment of agents to monitor or cover an area. For example, agents may be deployed to cover an entire area to detect and report anomalies or to collect data on parameters like temperature, humidity, or pollution levels across a specified region. While traditional coverage control deals with robots servicing a single task, the increasing autonomous capabilities of robots enable them to handle multiple tasks. For instance, in search and rescue operations, robots might be required to locate survivors, assess structural damage, and deliver supplies concurrently. Similarly, in agricultural applications, robots may need to simultaneously monitor crop health, identify pest infestations, and manage irrigation systems. Motivated by these applications, this paper proposes a new class of coverage problems, namely the multitask coverage problem.

The demand for robot services across different tasks and regions within an environment is often unknown beforehand and must be learned in real-time. This challenge necessitates a principled balance between exploration and exploitation. Agents must explore unknown areas to gather information about service demands and exploit their existing knowledge to efficiently meet these demands. Moreover, service requirements often exhibit spatial correlations, where the demand in one region can influence or predict the demand in neighboring regions. For example, if a particular area in an agricultural field shows signs of pest infestation, nearby areas are also likely to require attention. Similarly, requirements for different services can be correlated; in environmental monitoring, regions with high pollution levels might also show elevated temperatures or humidity. In disaster response, for instance, areas requiring medical assistance due to a high number of injured individuals might also require search and rescue operations to locate and extract trapped survivors. 

To address these complexities, we adopt a multitask Gaussian process (GP) framework that captures these spatial and service correlations. By intertwining multitask learning with multitask coverage, we develop efficient coverage control strategies that can adapt in real-time to the service demands of different regions and tasks.

Classical approaches to coverage control \cite{Cortes2004, Cortes2005} assume that the sensory demand function is known and employ Lloyd's algorithm \cite{Lloyd1982} to guarantee the convergence of agents to a local minimum of the coverage cost. In these algorithms, each agent communicates with the agents in the neighboring partitions at each time and updates its partition. These approaches have been extended to consider obstacle avoidance~\cite{hussein2007effective,panagou2016distributed}, nonuniform coverage~\cite{Lekien2010,leonard2013nonuniform}, nonlinear spaces~\cite{lin2023coverage}, spatiotemporal demands~\cite{diaz2017human,bakolas2013optimal,pratissoli2025distributed,mantovani2026distributed}, energy constraints~\cite{bentz2018complete,munir2024energy}, heterogeneous sensing~\cite{santos2018coverage}, and social fairness~\cite{malencia2023socially}; a review is presented in~\cite{cortes2017coordinated}.

Distributed \emph{gossip-based} coverage algorithms \cite{Bullo2012} address potential communication bottlenecks in classical approaches by updating partitions pairwise between the agents in neighboring partitions. While much of the work in coverage considers continuous convex environments, a discrete graph representation of the environment is considered in~\cite{Durham2012}, which allows for non-convex environments. Additionally, gossip-based coverage algorithms in graph environments converge almost surely to pairwise-optimal partitions in finite time~\cite{Durham2012}. These approaches have been extended to consider one-to-base station communication~\cite{Patel2016} and cloud communication~\cite{peters2017cloud}.

Significant research has also focused on adaptive coverage, where agents do not have prior knowledge of the sensory demand. Parametric estimation approaches to adaptive coverage, such as those by Schwager et al. \cite{Schwager2009, Schwager2017}, model the sensory demand as a linear combination of basis functions and propose algorithms to learn the weight of each basis function. Nonparametric approaches \cite{Choi2008, Xu2011, Luo2018, Luo2019, Todescato2017, Benevento2020, santos2021multi, agrawal2024multi, mantovani2024distributed} model the sensory demand as the realization of a Gaussian Process (GP) and make predictions by conditioning on observed values sampled over the operating environment. Alternative adaptive coverage approaches \cite{Davison2015, Choi2010, zhang2024multi, prajapat2022near} have also been explored. Recently, reinforcement learning and graphical neural network-based approaches have gained traction in adaptive coverage research \cite{gosrich2022coverage, zhang2023constrained,agarwal2025lpac}.

In this paper, we introduce a novel \emph{multitask coverage problem} and design algorithms to address this problem for both known and unknown sensory demands. For the case of known sensory demands, we focus on a federated communication architecture and a discrete environment and design coverage algorithms that converge in finite time. For unknown sensory demands, we adopt a multitask GP framework \cite{bonilla2007multi} to learn the multitask sensory demand functions and integrate it with our coverage algorithm to develop an \emph{adaptive multitask coverage algorithm}. Specifically, we leverage the ``doubling trick" from the multi-armed bandit literature to design a deterministic schedule of learning and coverage.

We characterize the performance of the proposed algorithm using a novel notion of multitask coverage regret, comparing the performance of the adaptive algorithm to that of an oracle that knows the demand functions a priori. We show that the algorithm achieves sublinear cumulative regret. Our notion of regret is distinct from other notions of coverage regret in the literature, such as those in \cite{Benevento2020}, in that it compares the performance of adaptive algorithms with the set of solutions to which coverage algorithms are known to converge, for example, the so-called centroidal Voronoi partitions \cite{cortes2017coordinated}, rather than the globally optimal solution, convergence to which may not be guaranteed.

The major contributions of this work are fivefold. First, motivated by emerging robotic applications, we introduce a novel multitask coverage problem. Second, we design a federated multitask coverage algorithm and establish its convergence properties. Third, we employ a multitask Gaussian Process (GP) framework to learn the sensory demand functions and leverage this to design an adaptive multitask coverage algorithm. Fourth, we introduce a novel notion of multitask coverage regret and demonstrate that our adaptive algorithm achieves sublinear cumulative regret. Finally, we numerically illustrate the proposed algorithms.

An earlier version of this work \cite{LW-AM-VS:20h} was presented at the 2021 International Conference on Robotics and Automation, where we discussed single-task coverage and gossip communication. In this work, we generalize our approach to multitask coverage, which requires the development of novel coverage algorithms and their integration with learning algorithms.

The remainder of the paper is organized as follows. In Section~\ref{sec:background}, we introduce the necessary background information. Section~\ref{sec:problem} presents the multitask coverage problem, the communication model, the multitask estimation framework, and the concept of multitask regret. In Section~\ref{sec:multitask-coverage}, we describe and analyze the federated multitask coverage algorithm. Section~\ref{sec:adaptive-multitask-coverage} extends this algorithm to an adaptive multitask coverage algorithm and establishes its performance in terms of regret. We provide numerical illustrations of the proposed algorithms in Section~\ref{sec:simulations} and conclude the paper in Section~\ref{sec:conclusions}.


\section{Background}\label{sec:background}
\subsection{Graph Representation of Environment}
We consider a discrete environment modeled by an undirected graph $G = (V, E)$, where the vertex set $V$ contains the finite set of points to be covered and the edge set $E  \subseteq V \times V$ is the collection of physically adjacent pairs of vertices that can be reached from each other without passing through other vertices. Let the weight map $w: E \rightarrow \real_{>0}$ indicate the distance between connected vertices. We assume $G$ is connected. Following the standard definition of a weighted undirected graph, a path in $G$ is an ordered sequence of vertices such that there exists an edge between consecutive pairs of vertices. The distance between vertices $v$ and $v'$ in $G$, denoted by $d_G(v,v')$, is defined by the minimum of the sums of the weights in the paths between $v$ and $v'$.

\begin{definition}[$N$-partition]
The $N$-partition of graph $G$ is defined as a collection $\mathcal{P} = \seqdef{P_i}{i=1}^N$ of $N$ nonempty subsets of $V$ such that $\cup_{i=1}^N P_i = V$ and $P_i \cap P_j =\emptyset$ for any $i \neq j$. $\mathcal{P}$ is said to be connected if the subgraph induced by $P_i$ denoted by $G[P_i]$ is connected for each $i \in [N]$. $G[P_i]$ being induced subgraph means its vertex set is $P_i$ and its edge set includes all edges in $G$ whose both end vertices are included in $P_i$.
\end{definition}

\begin{definition}[$N$-covering]
	Given the graph environment $G=(V,E)$, an $N$-covering of $V$ is a collection of $N$ subsets of $V$ denoted by $\mathcal{P} = \seqdef{P_i}{i=1}^N$ such that $\cup_{i=1}^N P_i = V$.
\end{definition}

In contrast to the $N$-partition, $N$-covering allows overlapping subsets $P_i$ and $P_j$.

\subsection{Single Task Coverage Problem}
Consider a team of $N$ robot agents. The configuration of the robot team is a vector of $N$ vertices $\bvec{\eta} \in V^N$ occupied by the robot team, where the $i$-th entry $\eta_i$ corresponds to the position of the $i$-th robot. Suppose there exists a demand function $\phi: V \rt \real_{\geq 0}$ that assigns a nonnegative weight to each vertex in $G$. Intuitively, $\phi(v_i)$ could represent the intensity of a signal or phenomenon of interest at that location, such as the brightness of light, the strength of an acoustic signal, or the concentration of a chemical substance. Given a connected $N$-partition $\mathcal{P}$, the $i$-th robot is tasked to cover the vertices in $P_i$. The coverage cost corresponding to configuration $\bvec{\eta}$ and connected $N$-partition $\mathcal{P}$ can be defined as
\begin{equation}
\label{eq:cost}
H(\bvec{\eta}, \mathcal{P}) = \sum_{i=1}^N \sum_{v'\in P_i} d_{G[P_i]}(\eta_i,v') \phi(v'). 
\end{equation}

In a coverage problem, the objective is to minimize this coverage cost by selecting an appropriate configuration $\bvec{\eta}$ and a connected $N$-partition $\mathcal{P}$. 
There are two intermediate results about the optimal selection of configuration or partition when the other is fixed~\cite{Durham2012}.

For a fixed configuration $\bvec{\eta}$ with distinct entries, an optimal connected $N$-partition $\mathcal{P}$ minimizing coverage cost is called a Voronoi partition denoted by $\mathrm{V}(\bvec\eta)$. Formally, for each $P_i \in \mathrm{V}(\bvec
\eta)$ and any $v' \in P_i$, \[d_G(v',\eta_i) \leq d_G(v',\eta_j), \quad \forall j\in \until{N} .\] 

For a fixed connected $N$-partition $\mathcal{P}$, the centroid of the $j$-th partition $P_j \in \mathcal{P}$ is defined by
\[c_i \in \argmin_{v \in P_i} \sum_{v' \in P_i} d_{G[P_i]}(v, v') \phi(v’),\]
and the optimal configuration is to place one robot at the centroid of each $P_i \in \mathcal{P}$. We denote the vector of the centroid of $\mathcal{P}$ by $\bvec{c}(\mathcal{P})$ with $c_i$ as its $i$-th element.

Building upon the above two properties, the classic Lloyd algorithm iteratively places the robot at the centroid of the current Voronoi partition and computes the new Voronoi partition using the updated configuration. It is known that the robot team eventually converges to a class of partitions called the centroidal Voronoi partition, defined below.

\begin{definition}[Centroidal Voronoi partition,~\cite{DistCtrlRobotNetw}]
An $N$-partition $\mathcal{P}$ is a centroidal Voronoi partition of $G$ if $\mathcal{P}$ is a Voronoi partition generated by some configuration with distinct entries $\bvec{\eta}$, i.e., $\mathcal{P} = \mathrm{V}(\bvec{\eta})$, and 
$\bvec{c}\left(\mathrm{V}(\bvec{\eta})\right) = \bvec{\eta}.$
\end{definition}

It needs to be noted that an optimal partition and configuration pair minimizing the coverage cost $H(\bvec{\eta}, \mathcal{P})$ is of the form $(\bvec{\eta}^*, \mathrm{V}(\bvec{\eta}^*))$, where $\bvec{\eta}^*$ has distinct entries and $\mathrm{V}(\bvec{\eta}^*)$ is a centroidal Voronoi partition. A configuration-partition pair $(\bvec{\eta}', \mathrm{V}(\bvec{\eta}'))$ is considered to be an efficient solution to the coverage problem if $\mathrm{V}(\bvec{\eta}')$ is a centroidal Voronoi partition, even though it is possibly suboptimal~\cite{DistCtrlRobotNetw}.


\section{Problem Formulation}
\label{sec:problem}

In this section, we introduce the multitask coverage problem, the federated communication model, the multitask GP estimation framework, and the multitask coverage regret.  

\subsection{Multitask Coverage with Heterogeneous Agents}

Consider a set of $N$ heterogeneous robotic agents that service $M$ tasks at each vertex $v \in V$. Let $f_i^j: \real \rightarrow \real$ be a strictly increasing function such that $f_i^j(d_G(\eta,v))$ determines the cost of servicing a task $j$ at vertex $v$ by robot $i$ located at vertex $\eta \in V$.  The variation of $f_i^j$ over different agents $i$ and tasks $j$ reflects the heterogeneous capability of agents servicing different tasks. The importance of a vertex $v \in V$ is represented by a nonnegative vector $\Phi(v) = \big[\phi^1(v)\; \cdots\; \phi^{M}(v)\big]^{\top}$ where $\phi^j(v): V \rt \real_{\geq 0}$ reflects the demand of service on the $j$-th task.

Analogously to~\cite{santos2018coverage}, we consider that each vertex may require different agents for different services, and there is one corresponding $N$-covering of environment $\mathcal{P}^j = \seqdef{P_i^j}{i=1}^N$ for each task $j$.  Specifically, $P_i^j \subseteq V$ is the subset of vertices at which agent $i$ services the $j$-th task. With the collection of $N$-coverings for all tasks $\bvec{\mathcal{P}}= \seqdef{\mathcal{P}^j}{j=1}^M$ and robot team configuration $\bvec{\eta} \in V^N$, the multitask coverage cost for the $N$ robotic agents is defined by
\begin{equation}
\label{eq:mulcost}
\mathcal{H}(\bvec{\eta}, \bvec{\mathcal{P}}) = \sum_{j=1}^M \sum_{i=1}^N \sum_{v\in \mathcal{P}_i^j} f_i^j\left(d_G(\eta_i,v)\right) \phi^j(v). 
\end{equation}

We now introduce a few definitions. 
\begin{definition}[Multitask centers]\label{def:multitask-centers}
    Given a collection of $N$-coverings $\bvec{\mathcal{P}}= \seqdef{\mathcal{P}^j}{j=1}^M$, the multitask centers are defined as $\bvec{c}(\bvec{\mathcal{P}}) = \bvec{\eta}^*$ with
 \begin{equation}\label{def:mulcentroid}
		\eta_i^* \in \argmin_{c \in V} \sum_{j=1}^M \sum_{v\in \mathcal{P}_i^j} f_i^j\left(d_G(c,v)\right) \phi^j(v), \forall i\in \until{N}.
		\end{equation}
\end{definition}

\begin{definition}[Multitask equitable partitions]\label{def:multitask-partitions}
Given a robot team configuration $\bvec{\eta}$, the multitask equitable partitions, denoted by
$\mathcal{V}(\bvec{\eta}) = \seqdef{\mathcal{P}^j}{j=1}^M$ consist of $M$ partitions of $V$, consist of $M$ partitions of $V$, where for any task $j \in \until{M}$ and location $v \in P_i^j$,
		\begin{equation}\label{def:mulvoronoi}
			i \in \argmin_{i' \in \until{N}} f_{i'}^j\left(d_G(\eta_{i'},v)\right).
		\end{equation}
\end{definition}
The multitask equitable partitions can be viewed as a conceptual extension of the Voronoi partition, where equation~\eqref{def:mulvoronoi}
assigns every task at each location to the most suitable agent. If multiple robots attain the minimum, ties are broken arbitrarily. 
Note that the collection of $N$-coverings $\bvec{\mathcal{P}}$ being $M$ partitions implies $P^j_{i} \cap P^j_{i'} =\emptyset$ for any  $j \in \until{M}$ and any $i \neq i'$.

We now generalize the idea of a centroidal Voronoi partition~\cite{DistCtrlRobotNetw} to \emph{multitask centroidal equitable partition}. 

\begin{definition}[Multitask centroidal equitable partition] \label{def: MCEP}
For $M$ tasks, the collection of $N$-coverings $\bvec{\mathcal{P}}= \seqdef{\mathcal{P}^j}{j=1}^M$ is a multitask centroidal equitable partition generated by the robot team configuration $\bvec{\eta}\in V^N$ if 
	\begin{itemize}
		\item For any  $j \in \until{M}$, $P^j_{i} \cap P^j_{i'} =\emptyset$ for any $i \neq i'$;
		\item $\bvec{\eta}$ is the vector of multitask centers $\bvec{c}(\bvec{\mathcal{P}})$ associated with $\bvec{\mathcal{P}}$ defined in  equation~\eqref{def:mulcentroid}; 
		\item $\bvec{\mathcal{P}}$ is the collection of $M$ multitask equitable partitions $\mathcal{V}(\bvec{\eta})$ associated with $\bvec{\eta}$ as defined in equation~\eqref{def:mulvoronoi}.   
	\end{itemize}
\end{definition}

\subsection{Federated Communication Model}
Inspired by~\cite{Todescato2017, Patel2016}, we consider a federated (one-to-base station) communication architecture. Our choice is motivated by two reasons. First, a federated architecture allows for a more communication-efficient decentralized nonparametric estimation as noted in~\cite{Todescato2017}. Second, due to the robot heterogeneity, the pairwise optimal partition may not be a centroidal Voronoi partition. Moreover, such an architecture is natural in environments with limited communication, such as underwater environments.  Each robotic agent can communicate with a central base station asynchronously according to the communication model described below.
\begin{itemize}
	\item For each agent, the time between consecutive communications to the base station is lower-bounded by $l>0$ and upper-bounded by $u>l$. We assume the base station communicates with only one agent at a time.	
	\item Each agent $i$ receives and stores a collection of vertices $\seqdef{\mathcal{P}_i}{j=1}^M$ and an appropriate center
 every time it communicates with the base station.
\end{itemize}
We assume the base station has enough power to store and operate on multiple $N$-coverings of $G=(V,E)$ and a list of multitask centroids $\bvec{\eta}\in V^N$.

\subsection{Multitask Nonparametric Estimation}\label{sec:non-parametric-estimation}

Let $\bvec{\Phi} = \big[\Phi^{\top}(v_1)  \; \cdots \; \Phi^{\top}(v_{\abs{V}})\big]^{\top}$ be an $M\abs{V}$ dimensional vector denoting the demands of $M$ different services at all vertices, where $|\cdot|$ denotes set cardinality.
\begin{assumption} \label{ass: prior}
    The demand vector $\bvec{\Phi}$ subjects to a multivariate Gaussian prior distribution $\mathcal{N}( \bvec{\tilde{\mu}}_0 ,\bvec{\tilde{\Sigma}}_0)$, where $\bvec{\tilde{\mu}}_0$ is the prior mean and $\bvec{\tilde{\Sigma}}_0 = \bvec{\Sigma}_0 \otimes \bvec{K}$ is the covariance matrix. A robot can visit any vertex $v_i \in V$ and collect a noisy vector sample $\bvec{y} = \Phi(v_i)+ \bvec{\epsilon}$ corresponding to the demands of $M$ services at $v_i$, where $\bvec{\epsilon} \sim \mathcal{N}(\bvec{0}, \sigma^2 I_M)$.
\end{assumption}
In \cref{ass: prior}, $\bvec{\Sigma}_0$ is the $\abs{V} \times \abs{V}$ covariance matrix reflecting the task similarity between different locations, $\bvec{K}$ is the $M \times M$ positive semidefinite matrix specifying the inter-task correlation, and $\otimes$ denote the Kronecker product. This assumption is consistent with the multitask Gaussian process model defined in~\cite{bonilla2007multi}.


We adopt a Bayesian framework to estimate $\bvec{\Phi}$. Specifically,
 let $n_i(t)$ be the number of samples and $\bvec{s}_i(t)$ be the summation of sampling vectors from $v_i$ until time $t$. Then, from~\cite[Chapter~10]{SMK:93}, the posterior distribution of $\bvec{\Phi}$ at time $t$ becomes $ \mathcal{N} \big( \bvec{\tilde{\mu}}(t) ,\bvec{\tilde{\Sigma}}(t)\big)$, where
\begin{equation}\label{mulposterior}
\begin{split}
\bvec{\tilde{\Sigma}}^{-1}(t) &= \bvec{\tilde{\Sigma}}_0^{-1} + \sum_{i=1}^{\abs{V}} \frac{n_i(t)}{\sigma^2} \bvec{e}_i \bvec{e}_i^{\top} \otimes I_M\\
\bvec{\tilde{\mu}}(t) & = \bvec{\tilde{\Sigma}}(t) \Bigg( \bvec{\tilde{\Sigma}}^{-1}_0 + \sum_{i = 1}^{\abs{V}}\bvec{e}_i \otimes \frac{\bvec{s}_i(t)}{\sigma^2} \Bigg). \\
\end{split}
\end{equation}
Here, $\bvec{e}_i$ is the standard unit vector with $i$-th entry to be $1$.

\subsection{Multitask Coverage Regret}
To achieve efficient coverage, the agents need to balance the trade-off between sampling the environment to learn $\bvec{\Phi}$ (exploration) and 
minimizing the coverage cost using estimated $\bvec{\Phi}$ (exploitation). To characterize this trade-off, we introduce a notion of coverage regret. 

\begin{definition}[Multitask coverage regret]\label{def:mulregret}
	At each time $t$, let the team configuration be $\bvec{\eta}_t$ and the collection of $N$-covering be $\bvec{\mathcal{P}}_t$. The multitask coverage regret with respect to demand function $\bvec{\Phi}$ until time $T$ is defined by $\sum_{t=1}^T \mathcal{R}_t(\bvec{\Phi})$, where 
	\begin{align*}
	\mathcal{R}_t(\bvec{\Phi}) &=  2\mathcal{H}(\bvec{\eta}_t, \bvec{\mathcal{P}}_t) - \mathcal{H}(\bvec{c}(\bvec{\mathcal{P}}_t), \bvec{\mathcal{P}}_t) - \mathcal{H}(\bvec{\eta}_t, \mathcal{V}(\bvec{\eta}_t)).
	\end{align*}
\end{definition}

The instantaneous regret $\mathcal{R}_t(\bvec{\Phi})$ is the sum of two terms $\mc H(\bvec{\eta}_t, \bvec{\mathcal{P}}_t) - \mc H(\bvec{c}(\bvec{\mathcal{P}}_t), \bvec{\mathcal{P}}_t)$ and $\mc H(\bvec{\eta}_t, \bvec{\mathcal{P}}_t) - \mc  H(\bvec{\eta}_t, \vor(\bvec{\eta}_t))$. The former term is the regret induced by the deviation of the current configuration from the optimal configuration for the current partition, while the latter term is the regret induced by the deviation of the current partition from the optimal partition for the current configuration. Accordingly, no regret is incurred at time $t$ if and only if $\bvec{\mathcal{P}}_t$ is multitask equitable partitions and  $\bvec{\eta}_t = \bvec{c}(\bvec{\mathcal{P}}_t)$. Two sources are contributing to the coverage regret. First, the estimation error in the demand function $\bvec{\Phi}$. Second, the deviation from the centroidal equitable partition while sampling the environment to learn $\bvec{\Phi}$ or while the coverage algorithm converges.


\section{Federated multitask Coverage Control}\label{sec:multitask-coverage}

In this section, we propose the federated multitask coverage algorithm for graph environments with known sensory demand functions and establish its convergence properties. Before we present the algorithm, we establish some properties of the multitask coverage function~\eqref{eq:mulcost}. 

\begin{lemma}[Properties of the multitask coverage function]\label{lem:multitask-coverage-funtion}
    The following statements hold for the multitask coverage cost function~\eqref{eq:mulcost}:
    \begin{enumerate}
        \item For a fixed $\bvec{\mathcal{P}}$, $\mathcal{H}(\bvec{\eta}, \bvec{\mathcal{P}})$ is minimized at the multitask center defined by equation~\eqref{def:mulcentroid}; 
        \item For a fixed $\bvec{\eta}$, $\mathcal{H}(\bvec{\eta}, \bvec{\mathcal{P}})$ is minimized 
by the set of multitask equitable partitions defined in Definition~\ref{def:multitask-partitions}.
    \end{enumerate}
\end{lemma}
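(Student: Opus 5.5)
For statement~1 the idea is that the cost separates across agents once $\bvec{\mathcal{P}}$ is fixed, and each agent can then be minimized on its own. Swapping the order of the sums in~\eqref{eq:mulcost} gives
\begin{equation*}
\mathcal{H}(\bvec{\eta}, \bvec{\mathcal{P}}) = \sum_{i=1}^N g_i(\eta_i), \qquad g_i(c) := \sum_{j=1}^M \sum_{v\in P_i^j} f_i^j\left(d_G(c,v)\right)\phi^j(v).
\end{equation*}
Each $g_i$ depends only on the $i$-th coordinate of $\bvec{\eta}$, and the domain $V^N$ is a product of copies of $V$. So minimizing over $\bvec{\eta}$ reduces to minimizing each $g_i$ over $c \in V$ separately. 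Since $V$ is finite, each minimum is attained. Any minimizer is exactly a choice $\eta_i^*$ as in~\eqref{def:mulcentroid}, so $\mathcal{H}(\bvec{c}(\bvec{\mathcal{P}}),\bvec{\mathcal{P}}) \le \mathcal{H}(\bvec{\eta},\bvec{\mathcal{P}})$ for all $\bvec{\eta}\in V^N$.

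For statement~2, fix $\bvec{\eta}$ and regroup the cost by task and vertex rather than by agent:
\begin{equation*}
\mathcal{H}(\bvec{\eta}, \bvec{\mathcal{P}}) = \sum_{j=1}^M \sum_{v\in V} \phi^j(v) \sum_{i:\, v\in P_i^j} f_i^j\left(d_G(\eta_i,v)\right).
\end{equation*}
Fix a pair $(j,v)$. Because $\mathcal{P}^j$ is an $N$-covering, the index set $\{i: v\in P_i^j\}$ is nonempty. The inner term is therefore at least $\phi^j(v)\min_{i'} f_{i'}^j(d_G(\eta_{i'},v))$, since $\phi^j(v)\ge 0$ and the sum contains at least one term that is at least the minimum.

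This lower bound uses the implicit assumption that $f_i^j \ge 0$ on distances, which is needed for coverage costs to be meaningful. Without nonnegativity, adding extra covering agents could lower the cost, so I would state this assumption explicitly; it is the only delicate point. Alternatively, one can restrict the comparison to coverings that are partitions for each task, as in Definition~\ref{def: MCEP}. In that case $\{i: v\in P_i^j\}$ is a singleton and the bound holds without any sign assumption.

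The equitable partitions $\mathcal{V}(\bvec{\eta})$ of Definition~\ref{def:multitask-partitions} assign each $(j,v)$ to exactly one agent that attains the minimum in~\eqref{def:mulvoronoi}. They therefore achieve the lower bound term by term. Summing over $(j,v)$ gives $\mathcal{H}(\bvec{\eta},\mathcal{V}(\bvec{\eta})) \le \mathcal{H}(\bvec{\eta},\bvec{\mathcal{P}})$ for every admissible $\bvec{\mathcal{P}}$. Ties do not affect the value, so every tie-breaking choice is a minimizer.
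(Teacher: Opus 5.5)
Your argument is correct and is the routine verification the paper has in mind when it omits this proof as ``easy to verify.'' For statement~1 you use separability of $\mathcal{H}$ across agents. For statement~2 you use a per-(task, vertex) lower bound $\phi^j(v)\min_{i'} f_{i'}^j(d_G(\eta_{i'},v))$, which $\mathcal{V}(\bvec{\eta})$ attains term by term.

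The caveat you flag is genuine, and the paper leaves it implicit. Since $f_i^j$ is only assumed strictly increasing, statement~2 can fail against overlapping $N$-coverings when $f_i^j$ is negative on distances. One therefore needs $f_i^j(0)\ge 0$, which holds for the linear costs $f_i^j(d)=a_{ij}d$ used in the simulations, or a restriction to per-task partitions, exactly as you propose.
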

\begin{proof}
    The statements are easy to verify, and the proof is omitted. 
\end{proof}

To minimize the multitask coverage cost~\eqref{eq:mulcost}, we adapt the single-task coverage algorithm in~\cite{Patel2016} to design the Federated Multitask Coverage Algorithm. Consider a new cost function 
\begin{equation} \label{eq: H_inf}
    \subscr{\mathcal{H}}{inf}(\bvec{\eta}, \bvec{\Phi}) = \sum_{j=1}^M \sum_{v\in V} \min_{i' \in \until{N}} f_{i'}^j\left(d_G(\eta_{i'},v)\right) \phi^j(v).
\end{equation}
Note that in $\subscr{\mathcal{H}}{inf}$, for a given $\bvec{\eta}$, every task at each location is assigned to the optimal robot.

Assume at time $t \in \real_{>0}$, robot $i$ communicates with the base station. The base station computes a new location $\bar{\eta}_{i}$ for robot $i$ such that the new configuration 
\[
\bvec{\eta}_{-i,\bar{\eta}_{i}} = (\eta_1, \ldots, \bar{\eta}_{i},\ldots,\eta_N)
\]
minimizes $\subscr{\mathcal{H}}{inf}$. With the new location $\bar{\eta}_{i}$, for each task, the set of vertices assigned to robot $i$ is updated by adding $\mathcal{P}_{i,+}^j$ and removing $\mathcal{P}_{i,-}^j$ defined by 
		\begin{align*}
		\mathcal{P}_{i,+}^j : =& \bigsetdef{v\in V}{f_{i}^j\left(d_G(\bar \eta_{i},v)\right) <\min_{i' \in \until{N} \setminus \{i\}} f_{i'}^j\left(d_G(\eta_{i'},v)\right)} \\
		\mathcal{P}_{i,-}^j : =& \Big\{v\in \mathcal{P}_i^j \intersection (\cup_{i \neq i'} \mathcal{P}_{i'}^j) \mid  f_{i}^j\left(d_G(\eta_{i},v)\right) \\
		&\qquad \qquad  \geq \min_{i' \in \until{N} \setminus \{i\}} f_{i'}^j\left(d_G(\eta_{i'},v)\right)\Big\}.
		\end{align*}

Note that $\mathcal{P}_{i,+}^j$ includes all the vertices whose $j$-th task is best served by robot $i$ at location $\bar{\eta}_i$ and $\mathcal{P}_{i,-}^j$ is the set of vertices where the $j$-th task is assigned to multiple robots, including robot $i$, and its service is not superior to other assigned robots. The details of the algorithm are described in Algorithm~\ref{algo:FMC}.

\begin{algorithm}[ht!]	
	\smallskip
	{\footnotesize  
		\SetKwInOut{Input}{  Input}
		\SetKwInOut{Set}{  Set}
		\SetKwInOut{Title}{Algorithm}
		\SetKwInOut{Require}{Require}
		\SetKwInOut{Output}{Output}
		
		\Input{Environment graph $G$ and demand function $\bvec{\Phi}$ \;}
		
		\Set{$\alpha \in (0,1)$ and $\beta>1$\;}
		
		\medskip
		
		\nl\If{$\subscr{\mathcal{H}}{inf}(\bvec{\eta}) > \min_{v\in V} \subscr{\mathcal{H}}{inf}(\bvec{\eta}_{-i,v} )$}{
			\smallskip
			Update $\eta_i$ in $\bvec{\eta}$ such that
			\[\eta_i \in \argmin_{v\in V} \subscr{\mathcal{H}}{inf}(\bvec{\eta}_{-i,v} ). \]			
		}
		  \nl For each task $j\in \until{M}$, compute sets for robot $i$:
		
		\begin{align*}
		{P}_{i,+}^j : =& \bigsetdef{v\in V}{f_{i}^j\left(d_G(\eta_{i},v)\right) <\min_{i' \in \until{N}} f_{i'}^j\left(d_G(\eta_{i'},v)\right)}, \\
		{P}_{i,-}^j : =& \Big\{v\in {P}_i^j \intersection (\cup_{i \neq i'} P_{i'}^j) \mid  f_{i}^j\left(d_G(\eta_{i},v)\right) \\
		&\qquad \qquad \qquad \qquad \geq \min_{i' \in (\until{N} \setminus \{i\})} f_{i'}^j\left(d_G(\eta_{i'},v)\right)\Big\}. 
		\end{align*}
		
		Set ${P}_i^j \leftarrow ({P}_i^j \setminus {P}_{i,-}^j) \cup P_{i,+}^j$ for each $j\in\until{M}$.
		
		\nl Inform agent $i$ its new position $\eta_i$ and task-specific sets $\seqdef{{\mathcal{P}}_i^j}{j=1}^M$.
		
		\smallskip

		\caption{Federated Multitask Coverage}
		\label{algo:FMC}
	}
\end{algorithm}

To analyze the federated multitask coverage algorithm, we intend to apply a Lyapunov argument that relies on the following Lemma regarding the convergence of dynamic systems driven by a set-valued map $T: X \rightrightarrows X$.

\begin{lemma}[Lemma 3,~\cite{Patel2016}]\label{lemma: svm_conv}
	Let $(X, d)$ be a finite metric space. Given a collection of maps $T_1,\ldots,T_m: X\rightarrow X$, define the set-valued map $T: X \rightrightarrows X$ by $T(x) = \{T_1(x),\ldots,T_{m}(x)\}$ and let $\seqdef{x_n}{n\in \integer_{\geq 0}}$ be a sequence such that $x_{n+1} = T_{\nu(n)}(x_n)$, where $\nu: \integer_{\geq 0} \rightarrow \until{m}$. Assume that:
	\begin{enumerate}
		\item there exists a Lyapunov function $U: X \rightarrow \real$ such that $U(x') < U(x)$ for all $x \in X$ and $x' \in T(x) \setminus \{x\}$; and
		
		\item for all $i \in  \until{m}$, there exists a subsequence of time steps sequence $\seqdef{n_k}{k\in \integer_{\geq 0}}$ such that $\sigma(n_k) = i$ and $n_{k+1} - n_{k}$ is bounded.
	\end{enumerate}
Let $F_i = \setdef{x\in X}{T_i(x) = x}$ be the set of fixed point of $T_i$. Then, for all $x_0 \in X$, there exists $n'$ and $\bar x \in (F_1 \cap \ \cdots \ \cap F_m)$ such that $x_n = \bar{x} $ for all $n\geq n'$.
\end{lemma}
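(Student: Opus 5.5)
The plan is a potential-function argument: the set of states is finite and every non-fixed move strictly decreases $U$, so the iteration must freeze, and the freezing point must then be a common fixed point.

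\textbf{Step 1 (descent stops).} Since $X$ is finite, $U$ takes finitely many values on $X$. By assumption~1, whenever $x_{n+1}\neq x_n$ we have $U(x_{n+1})<U(x_n)$, and whenever $x_{n+1}=x_n$ the value of $U$ is unchanged. Thus $\{U(x_n)\}$ is nonincreasing. A strictly decreasing chain of values of $U$ has length at most $|X|$, so the number of indices $n$ with $x_{n+1}\neq x_n$ is at most $|X|-1$. Hence there is an $n'$ with $x_{n+1}=x_n$ for all $n\ge n'$, that is, $x_n=\bar x:=x_{n'}$ for all $n\geq n'$.

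\textbf{Step 2 (the limit is a common fixed point).} Fix $i\in\until{m}$. By assumption~2 (reading $\sigma$ as $\nu$), the map $T_i$ is applied along a subsequence $n_k$ with bounded gaps. In particular $T_i$ is applied infinitely often, so some $n_k\ge n'$ exists. At that step $\bar x = x_{n_k+1}=T_{\nu(n_k)}(x_{n_k})=T_i(\bar x)$, so $\bar x\in F_i$. Because $i$ was arbitrary, $\bar x\in F_1\cap\cdots\cap F_m$.

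\textbf{Main obstacle.} The only delicate point is Step 1: it needs monotonicity of $U$ along the whole trajectory, including steps where $x_{n+1}=x_n$. These steps are trivially non-increasing, and finiteness then bounds the number of strict decreases. For the conclusion only infinite recurrence of each index is needed, not the full bounded-gap hypothesis.
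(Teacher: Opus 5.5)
Your proof is correct and complete: strict descent of $U$ at every change, together with finiteness of $X$, allows at most $|X|-1$ changes. Since each index recurs along an infinite subsequence, the frozen state $\bar x$ must satisfy $T_i(\bar x)=\bar x$ for every $i$. The paper gives no proof of its own (it imports the lemma from \cite{Patel2016}), and yours is the standard argument for results of this type.

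Your observation that only infinite recurrence of each index is needed is right for the statement as written. What the bounded-gap hypothesis additionally buys is a bound on $n'$ that is uniform in $x_0$. Once every index has appeared, each window of $B$ consecutive steps (with $B$ the largest gap) either contains a strict decrease of $U$ or starts at a common fixed point. Hence $n'$ exceeds that initial offset by at most $B(|X|-1)$, which is the kind of uniform convergence time the later regret analysis implicitly relies on.
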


We now establish the convergence properties of Algorithm~\ref{algo:FMC} as follows. 

\begin{lemma}[Convergence of federated multitask coverage algorithm]\label{lem:convergence-coverage}
For the multitask coverage problem with heterogeneous agents, the federated communication protocol, and the known demand function $\bvec{\Phi}$, the federated multitask coverage algorithm converges to the set of multitask centroidal equitable partitions in a finite number of steps. 

\end{lemma}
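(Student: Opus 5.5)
We will invoke Lemma~\ref{lemma: svm_conv} (Lemma~3 of~\cite{Patel2016}). As state space $X$ we take the pairs $(\bvec{\eta}, \bvec{\mathcal{P}})$, where $\bvec{\eta}\in V^N$ and $\bvec{\mathcal{P}}$ is a collection of $M$ $N$-coverings of $V$. This set is finite because $V$ is finite, and any metric on it (for instance the discrete one) makes it a finite metric space. For each agent $i\in\until{N}$ we let $T_i$ be the update that Algorithm~\ref{algo:FMC} performs when agent $i$ communicates with the base station. To make $T_i$ single-valued, we fix deterministic tie-breaking rules: in the $\argmin$ of step~1, in the choice of $\eta_i$, and in the equitable assignment~\eqref{def:mulvoronoi}. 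The sequence $x_n$ is the state after the $n$-th communication event, and $\nu(n)$ is the agent that communicates at that event. The communication model bounds the time between consecutive contacts of each agent above by $u$ and below by $l$, and the base station talks to one agent at a time. Hence at most roughly $N u/l$ events occur between consecutive contacts of agent $i$, which gives hypothesis~2 of the lemma.

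\textbf{Lyapunov function.} We use a lexicographic Lyapunov function
\[
U(\bvec{\eta},\bvec{\mathcal{P}}) = \big(\subscr{\mathcal{H}}{inf}(\bvec{\eta},\bvec{\Phi}),\ \mathcal{H}(\bvec{\eta},\bvec{\mathcal{P}})\big).
\]
Since $X$ is finite, this can be turned into a real-valued function by the encoding $\subscr{\mathcal{H}}{inf} + \delta\,\mathcal{H}$ with $\delta>0$ small enough. Step~1 changes $\eta_i$ only when the change strictly decreases $\subscr{\mathcal{H}}{inf}$. If $\eta_i$ does not move, then $\subscr{\mathcal{H}}{inf}$ is unchanged, and step~2 only reassigns vertices of $P_i^j$ according to the rule $\min_{i'} f_{i'}^j(d_G(\eta_{i'},v))$. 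Every vertex that is removed from or added to $P_i^j$ is therefore served at equal or lower cost, and we will argue that $\mathcal{H}$ strictly decreases whenever $\bvec{\mathcal{P}}$ actually changes. This requires care with ties: we will restrict the tie-breaking so that a vertex moves only when it strictly improves or when it removes a duplicate assignment. The second case strictly lowers $\mathcal{H}$ as long as $\phi^j(v)>0$; zero-demand vertices can be handled by a fixed canonical ordering that is added as a third lexicographic component. Hypothesis~1 then holds, so Lemma~\ref{lemma: svm_conv} gives convergence in finitely many steps to some $\bar x\in F_1\cap\cdots\cap F_N$.

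\textbf{Fixed points.} It remains to characterize the common fixed points, and we expect this to be the main obstacle. Let $\bar x = (\bvec{\eta},\bvec{\mathcal{P}})$ be fixed by every $T_i$. Then for each $i$ the sets $P_{i,+}^j$ and $P_{i,-}^j$ contribute no change. We aim to show from this that within each task the sets are disjoint and agree with $\mathcal{V}(\bvec{\eta})$, so that the first and third conditions of Definition~\ref{def: MCEP} hold. Here we must also check that the coverings initialized at the start of the run get fully resolved into partitions: each vertex is eventually held only by a minimizing agent, because every agent is contacted infinitely often.

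For the second condition, that $\bvec{\eta}=\bvec{c}(\bvec{\mathcal{P}})$, we use that at a fixed point no single-agent relocation lowers $\subscr{\mathcal{H}}{inf}$. Since $\bvec{\mathcal{P}}=\mathcal{V}(\bvec{\eta})$, we have
\[
\subscr{\mathcal{H}}{inf}(\bvec{\eta}_{-i,c}) \le \mathcal{H}(\bvec{\eta}_{-i,c},\bvec{\mathcal{P}}) \quad \text{and} \quad \subscr{\mathcal{H}}{inf}(\bvec{\eta})=\mathcal{H}(\bvec{\eta},\bvec{\mathcal{P}}).
\]
Suppose the multitask center of agent $i$ strictly improved the sum in~\eqref{def:mulcentroid}. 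Then moving $\eta_i$ there would strictly lower $\mathcal{H}(\cdot,\bvec{\mathcal{P}})$, hence strictly lower $\subscr{\mathcal{H}}{inf}$, which contradicts the fixed-point property. Therefore $\eta_i$ is a minimizer in~\eqref{def:mulcentroid}, with ties resolved consistently, and $\bar x$ is a multitask centroidal equitable partition.
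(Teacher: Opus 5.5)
\textbf{There is a genuine gap: the second component of your Lyapunov function can increase.} The rest of your scaffolding is sound. The joint state space is finite, the bounded inter-contact times give hypothesis~2, and putting $\subscr{\mathcal{H}}{inf}$ first handles every step in which some $\eta_i$ moves. Your closing argument is also correct: if no single-agent move improves $\subscr{\mathcal{H}}{inf}$, then $\bvec{\eta}=\bvec{c}(\mathcal{V}(\bvec{\eta}))$, via $\subscr{\mathcal{H}}{inf}(\bvec{\eta}_{-i,c})\le\mathcal{H}(\bvec{\eta}_{-i,c},\bvec{\mathcal{P}})$. That step is one the paper leaves implicit.

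The problem is $\mathcal{H}(\bvec{\eta},\bvec{\mathcal{P}})$, which is not monotone under step~2 of Algorithm~\ref{algo:FMC}. When agent $i$ communicates, only $P_i^j$ is edited. A vertex $v\in P_{i,+}^j$ is \emph{added} to $P_i^j$, but it stays in the set $P_k^j$ of its previous holder $k$. It leaves $P_k^j$ (through $P_{k,-}^j$) only when agent $k$ later communicates. Because $\mathcal{H}$ counts every holder of $v$, this addition raises $\mathcal{H}$ by $f_i^j(d_G(\eta_i,v))\phi^j(v)$, while $\subscr{\mathcal{H}}{inf}$ is unchanged. The increase is strictly positive, e.g., for the linear costs of Section~\ref{sec:simulations} whenever $v\neq\eta_i$ and $\phi^j(v)>0$. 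Hence $U$ strictly increases along such a transition, and hypothesis~1 of Lemma~\ref{lemma: svm_conv} fails. Your phrase ``served at equal or lower cost'' treats an addition as a transfer, which it is not. No tie-breaking rule fixes this, because the increase has nothing to do with ties; in any case the sets $P_{i,\pm}^j$ are fixed by the algorithm, so you cannot redefine when vertices move.

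\textbf{How the paper handles it, and a cleaner repair.} The paper decouples the dynamics. Step~1 does not depend on $\bvec{\mathcal{P}}$, so it first applies Lemma~\ref{lemma: svm_conv} to $\bvec{\eta}$ alone with $U_1=\subscr{\mathcal{H}}{inf}$. Then, with $\bvec{\eta}^*$ frozen, it uses $U_2$, the cost of the \emph{cheapest} holder of each task--vertex pair, which additions lower. After that it uses $U_3$, the number of holders, which removals lower. Even there, a removal can raise $U_2$ when the best agent does not yet hold $v$.

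A cleaner repair, which also lets you keep a single lexicographic function, is a per-vertex potential. Fix $\bvec{\eta}$ and a pair $(j,v)$ whose minimizer $m$ of $f_{i'}^j(d_G(\eta_{i'},v))$ is unique. Then only $m$ can ever add $v$, $m$ never drops it, and any other holder drops it once it is co-held. So $2N$ (if $v\notin P_m^j$, else $0$) plus the number of holders of $v$ strictly decreases under every change made by step~2. For tied pairs, the holder count alone decreases, since only removals occur there. This potential does not involve $\phi^j$, so zero-demand vertices need no special ordering.

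\textbf{Your fixed-point step needs the same analysis.} The remark that every agent is contacted infinitely often does not by itself show that each vertex ends up with a minimizing agent. The per-vertex argument above does, but only when the minimizer is unique. If two agents tie, the strict inequality in $P_{i,+}^j$ means nobody ever adds $v$, and a non-minimizing sole holder keeps it forever. So you need a no-tie (genericity) assumption, which the paper also uses implicitly, to land in a multitask centroidal equitable partition.
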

\begin{proof} The proof proceeds in three phases, characterizing convergence of the configuration, the coverings, and removal of overlaps. For each phase, we construct a Lyapunov function $U_1, U_2$, and $U_3$, respectively, to analyze the system evolution.

\noindent \textbf{Step 1 (Convergence of configuration):}
In the federated multitask coverage algorithm, the update of configuration $\bvec{\eta}$ can be written as a set-valued map $T^1: V^N \rightrightarrows V^N$ such that
\[T^1_{i}(\bvec{\eta}) = \bvec{\eta}_{-i,\bar{\eta}_{i}}, \; 
\forall i\in \until{N}.\]
where $\bar{\eta}_i$ is the new location for robot $i$ assigned by~\cref{algo:FMC} when robot $i$ communicates with the base station. Taking function $U_1(\bvec{\eta})  = \subscr{\mathcal{H}}{inf}(\bvec{\eta},\bvec{\Phi})$ defined in~\eqref{eq: H_inf}, we apply~\cref{lemma: svm_conv} to get the configuration $\bvec{\eta}_t$ converge to a local minimum $\bvec{\eta}^*$ of $\subscr{H}{inf}$ in finite time.

\noindent \textbf{Step 2 (Convergence of $N$-coverings):} After the robot team configuration converges to the local minimum $\bvec{\eta}^*$, it remains unchanged. We then consider the evolution of the collection of $N$-coverings for the robot team, $\bvec{\mathcal{P}}= \seqdef{\mathcal{P}^j}{j=1}^M$. We show that, in finite time, $\bvec{\mathcal{P}}$ converges to a point $\bvec{\hat{\mathcal{P}}}$, where each task $j \in \until{M}$ at each location $v\in V$ is assigned to an optimal robot. Specifically, if $v \in \hat{P}_i^j$, then
\begin{equation}\label{eq: int_opt_cov}
    i \in \argmin_{i' \in \until{N}} f_{i'}^j\left(d_G(\eta^*_{i'},v)\right) \phi^j(v): v \in \hat{P}_{i'}^j.
\end{equation}

With the step $2$ in~\cref{algo:FMC}, we define a set-valued map $T^2$ such that
\begin{equation}\label{eq: T^2}
    T^2_{i}(\bvec{\mathcal{P}}) = \{{P}^j_{1},...,\bar{{P}}^j_{i} ,...,{P}^j_{N}\}_{j=1}^M, \forall i\in \until{N},
\end{equation}
where ${\bar{P}}_i^j = ({P}_i^j \setminus {P}_{i,-}^j) \cup P_{i,+}^j$ after robot $i$ communicates with the base station. Note that if $\bvec{\mathcal{P}}^j$ is an $N$-covering, $\{{P}^j_{1},...,\bar{{P}}^j_{i} ,...,{P}^j_{N}\}$ is also an $N$-covering.  Since task $j$ at location $v\in V$ can be assigned to multiple robots, we only consider the robot $i$ that minimizes servicing cost $f_i^j\left(d_G(\eta^*_i,v)\right) \phi^j(v)$, and define
\[U_2(\bvec{\mathcal{P}}) = \sum_{j=1}^M \sum_{v\in V} \min_{i \in \setdef{i'\in \until{N}}{v\in P_i^j}} f_i^j\left(d_G(\eta^*_i,v)\right) \phi^j(v). \]
With the definition of $P_{i,+}^j$, we apply Lemma~\ref{lemma: svm_conv} to get the configuration $\bvec{\mathcal{P}}$ converges to a $\bvec{\hat{\mathcal{P}}}$ satisfying~\eqref{eq: int_opt_cov} in finite time.

For each task $j$, we define
as the set of vertices that are owned by more than one agent.


\noindent \textbf{Step 3 (Removal of overlap):} After $\bvec{\mathcal{P}}$ converges to a $\bvec{\hat{\mathcal{P}}}$, the task $j$ at some location $v\in V$ can still be assigned to multiple robots. Recall that for a  multitask centroidal equitable partition in~\cref{def: MCEP}, each task $j$ at each location $v\in V$ can be allocated to only one robot. In this step, we show  $\bvec{\mathcal{P}}_t$ converges from $\bvec{\hat{\mathcal{P}}}$ to a multitask centroidal equitable partition $\hat{\hat{\bvec{\mathcal{P}}}}$ in finite time. Toward this end, we define
\[
U_3(\bvec{\mathcal{P}}) = \sum_{j=1}^M \sum_{v \in V} \abs{\setdef{i \in [N]}{v \in P_i^j} }.\]
With the set-valued map $T^2$ defined in~\eqref{eq: T^2} and the definition of $P_{i,-}^j$, the claim is a direct consequence of~\cref{lemma: svm_conv}.

With the above $3$ finite-time convergence result, the robot team is shown to converge to a multitask centroidal equitable partition in finite time.
\end{proof}


\section{Adaptive Multiple Task Coverage Control}
\label{sec:adaptive-multitask-coverage}

In this section, we present and analyze an adaptive multitask coverage algorithm called the Deterministic Sequencing of Multitask Learning and Coverage (DSMLC) algorithm. 

\subsection{The DSMLC Algorithm}

To address the estimation and coverage of multiple demands with heterogeneous agents, we design the DSMLC algorithm, which leverages the multitask nonparametric estimation from Section~\ref{sec:non-parametric-estimation} and the federated multitask coverage algorithm. It maintains the epoch-wise structure composed of exploration, information propagation, and coverage phases. 

Recall the multitask GP update equation~\eqref{mulposterior}. Let $\bvec{\tilde{\Sigma}}_i(t)$ be the $M\times M$ diagonal sub-block of the covariance matrix $\bvec{\tilde{\Sigma}}(t)$ associated with $\Phi(v_i)$. We first design a greedy policy to reduce the uncertainty in the multitask GP posterior.  

Let $X_n = (v_{s_1},\ldots,v_{s_n})$ be a sequence of $n$ vertices selected by the DSMLC algorithm  and let $\bvec{Y}_{n} = (\bvec{y}_1,\ldots, \bvec{y}_n)$ be $n$ observed $M$-dimensional noisy vectors corresponding to the demands at $(v_{s_1},\ldots,v_{s_n})$. Recall that, for $\bvec{Y} \sim \mathcal{N}(\bvec{\mu}, \bvec{\mathrm{\Sigma}})$, the entropy $\mathrm{H}\left( \bvec{Y} \right)  = \frac{1}{2} \log \abs{2 \pi e \bvec{\mathrm{\Sigma}}}$. With $ \mathrm{H}\left( \bvec{Y}_{n} \mid \bvec{\Phi}\right) = \frac{1}{2} \log \abs{2\pi e I_{M}}$, we apply chain rule of joint entropy to get the mutual information of $\bvec{Y}_{n}$ and $\bvec{\Phi}$ to be expressed as
\begin{align}
\mathrm{I}_{X_n} \left(\bvec{Y}_{n}; \bvec{\Phi}  \right) =& \mathrm{H}\left( \bvec{Y}_{n} \right) - \mathrm{H}\left( \bvec{Y}_{n} \mid \bvec{\Phi}\right) \nonumber\\
= & \sum_{k=1}^n \mathrm{H}\left( \bvec{Y}_{k} \mid \bvec{Y}_{k-1}  \right) -  \frac{1}{2} \log \abs{2\pi e \sigma^2 I_{M}} \nonumber\\
= & \frac{1}{2} \sum_{k=1}^{n} \log \abs {I_M + \sigma^{-2} \bvec{\tilde{\Sigma}}_{i}(k-1)}, \label{def: multmi}
\end{align}
where $\bvec{\tilde{\Sigma}}_{i}(k-1)$ is the marginal posterior variance of $\Phi(v_i)$ given $\bvec{Y}_{k-1}$. Using~\eqref{def: multmi}, our greedy policy maximizes one-step mutual information and selects the most uncertain vertex 
\begin{align}\label{eq:greedy}
s_{k}= \argmax_{i \in V} \abs {I_M + \sigma^{-2} \bvec{\tilde{\Sigma}}_{i} (k-1)}.    
\end{align}

Suppose the trace $ \tr\big(\bvec{\tilde{\Sigma}}_i(0)\big) \leq \tau$ for all $i$. Then, at the beginning of the $\ell$-th exploration epoch, the base station recursively computes the sampling points with~\eqref{eq:greedy} until
\[
\max_{i\in V}\tr\big(\bvec{\tilde{\Sigma}}_i(k) \big) < \alpha^{\ell}\tau,
\]
for some tunable constant $\alpha \in (0,1)$. Note that since the covariance update in~\eqref{mulposterior} only requires sampling location and not the actual measurement, the above computation can be done before the actual sampling. 

After computing the sampling points, the base station assigns each sampling point to the robot associated with the nearest location in the current multitask center configuration $\bvec{c}(\bvec{\mathcal{P}}_t)$. Each robot then adopts an appropriate vehicle routing algorithm~\cite{toth2002vehicle} to traverse the assigned sampling points.

In the information propagation phase, the robots transmit sufficient statistics for the update in equation~\eqref{mulposterior} to the base station, which then determines $\hat{\bvec{\Phi}}=\bvec{\tilde{\mu}}(t)$. 

Finally, in the $\ell$-th coverage phase, robots implement the federated MH coverage algorithm with the estimated demand function $\hat{\bvec{\Phi}}$ for a duration $\lceil \beta^\ell \rceil$, for some tunable $\beta >1$.

The detailed execution of these phases of DSMLC is shown in Algorithm~\ref{algo:MHDSLC}. 

\begin{algorithm}[t!]	
	\smallskip
	{\footnotesize  
		\SetKwInOut{Input}{  Input}
		\SetKwInOut{Set}{  Set}
		\SetKwInOut{Title}{Algorithm}
		\SetKwInOut{Require}{Require}
		\SetKwInOut{Output}{Output}
		
		\Input{Environment graph $G$, $(\tilde{\bvec{\mu}}_0$, and $\bvec{\tilde \Sigma}_0)$ \;}
		
		\Set{$\alpha \in (0,1)$ and $\beta>1$\;}
		
		\medskip
		
		\For{epoch $\ell =1,2,\hdots$}{
			\smallskip
			\emph{\textbf{Exploration phase: }}
			
			\nl Select sampling points recursively using~\eqref{eq:greedy} until
			\[\max_{i\in V} \tr\big(\bvec{\tilde{\Sigma}}_i(k) \big)  \leq  \alpha^\ell \tau. \]

            \nl Assign each sampling point to the robot whose multitask center is closest. 
   
			\emph{\textbf{Information propagation phase: }}
			
			\nl Each robot agent sends the sensing results to the base station
			
			\nl Estimate demand function $\hat{\bvec{\Phi}}=\bvec{\tilde{\mu}}(t)$ using~\eqref{mulposterior}
			
			\smallskip
			
			\emph{\textbf{Coverage phase: }}
			
			\nl \For{$t_\ell=1,2,\ldots, \ceil{\beta^\ell}$}{\ Based on the current estimate $\hat{\bvec{\Phi}}$, follow~\cref{algo:FMC} to update robot team configuration and partitions.}

		}
		\caption{DSMLC}
		\label{algo:MHDSLC}
	}
\end{algorithm}

\subsection{Analysis of the DSMLC algorithm}

In this section, we analyze DSMLC to provide a performance guarantee about the expected cumulative coverage regret. To this end, we leverage the information gained from the estimation phase to analyze the convergence rate of uncertainty. Then, we leverage the convergence properties of the federated MH coverage algorithm to establish an upper bound on the expected cumulative coverage regret.

\subsubsection{Mutual Information}

Recall that for the multitask Gaussian process defined in Section~\ref{sec:non-parametric-estimation}, the covariance function is $\bvec{\tilde{\Sigma}}_0 = \bvec{\Sigma}_0 \otimes \bvec{K}$. We first focus on the single task case with covariance function $\bvec{\Sigma}_0$ and then generalize to the case of multiple tasks.

For a single-task GP $\bvec{\phi}$, let $X_n \in V^n$ be a sequence of $n$ vertices and $\supscr{Y}{sngl}_n \in \real^n$ be observed sampling results corresponding to $X_n$.  If the sampling noise has variance $\sigma^2$, let $\mathrm{I}_{X_n} \left(\supscr{Y}{sngl}_n ; \bvec{\phi} \right) = \frac{1}{2}\log |(I_n + \sigma^{-2} \Sigma_{X_n})|$ be the mutual information.
Then, the $n$-step single-task maximum information gain is defined by 
\[
\supscr{\gamma}{sgnl}_n (\sigma^2) = \max_{\supscr{X}{sngl}_n \in V^n } \mathrm{I}_{X_n} \left(\supscr{Y}{sngl}_n ; \bvec{\phi}\right).
\]

Typically, it is hard to characterize $\supscr{\gamma}{sgnl}_n $ with a general $\bvec{\Sigma}_0$. Therefore, we make the following assumption.

\begin{assumption}\label{assum: ig}
	Vertices in $V$ lie in a convex and compact set $D \in \real^2$ and the covariance $\bvec{\Sigma}_0$ in~\cref{ass: prior} is determined by an exponential kernel function 
	\begin{equation}\label{def: stker}
	k(v_i, v_j) = \sigma_v^2 \exp\left(-\frac{\subscr{d}{eu}^2(v_i,v_j)}{2 l^2} \right),
	\end{equation}
	where $\subscr{d}{eu}(v_i,v_j)$ is the Euclidean distance between $v_i$ and $v_j$, $l$ is the length scale, and $\sigma_v^2$ is the variability parameter. 
\end{assumption} 

 We now recall an upper bound on $\gamma_n$ from~\cite[Th. 5]{NS-AK-SMK-MS:12}.
\begin{lemma}[{Information gain for squared exp. kernel}] \label{lemma: parmi}
     Let $D \subset \real^d$ be compact and convex, $d \in \natural$. Assume that the kernel function satisfies $k(x, x') \leq 1$. The maximum mutual information for the squared exponential kernel satisfies $\supscr{\gamma}{sngl}_n\in O\big( (\log n)^{d+1} \log\big(1 + \sigma_v^2/\sigma^2\big)\big)$.
\end{lemma}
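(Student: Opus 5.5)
The bound in \cref{lemma: parmi} is quoted from~\cite[Th. 5]{NS-AK-SMK-MS:12}, so the plan follows the spectral argument of that reference and adapts it to the kernel~\eqref{def: stker}.

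\textbf{Step 1 (reduction to eigenvalues).} Rescale the kernel to $k/\sigma_v^2 \le 1$, which changes the noise ratio to $\sigma_v^2/\sigma^2$ and produces the factor $\log(1+\sigma_v^2/\sigma^2)$. The mutual information $\frac{1}{2}\log|I_n+\sigma^{-2}\Sigma_{X_n}|$ is a monotone submodular function of the sampled set. Hence $\supscr{\gamma}{sngl}_n$ is within a factor $(1-1/e)^{-1}$ of the value attained by greedy selection, which is exactly the rule~\eqref{eq:greedy} with $M=1$. So it suffices to bound the greedy information gain. Following~\cite{NS-AK-SMK-MS:12}, discretize $D$ into a cover $D_T$ of size $n^{\tau d}$ and note that the kernel is Lipschitz, so the discretization error is $o(1)$. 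Then relate greedy sampling on $D_T$ to the continuous relaxation in which one allocates $n$ samples as a measure over $D_T$. This bounds the gain by
\[
\max_{\sum_t m_t = n}\; \frac{1}{2}\sum_{t} \log\bigl(1+\sigma^{-2} m_t \hat\lambda_t\bigr),
\]
where $\hat\lambda_t$ are the eigenvalues of the kernel operator with respect to the uniform measure on $D$.

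\textbf{Step 2 (truncation).} Split the sum at an index $T_*$. The top $T_*$ eigendirections contribute at most $T_*\log(1+\sigma_v^2 n/\sigma^2)$. The tail contributes at most $\sigma^{-2}n\sum_{t>T_*}\hat\lambda_t$ times a constant, using $\log(1+x)\le x$.

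\textbf{Step 3 (eigenvalue decay).} For the squared exponential kernel on a compact convex $D\subset\real^d$, the eigenvalues satisfy $\hat\lambda_s \le c\,B^{s^{1/d}}$ for some $B<1$. This follows from the Mercer expansion in Hermite functions: there are of order $k^{d-1}$ eigenfunctions at level $k$, each with eigenvalue proportional to $b^k$. Summing the tail gives
\[
\sum_{s>T_*}\hat\lambda_s \lesssim T_*\, B^{T_*^{1/d}}.
\]
Choose $T_*=(\log n)^d$. Then the tail term is $O(1)$ and the head term is $O\bigl((\log n)^{d}\log n\bigr)$, which gives $(\log n)^{d+1}$. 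Taken with the noise-ratio factor from Step 1, this yields the claimed order.

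The main obstacle is Step 1: making the passage from greedy selection on a finite set to the continuous eigenvalue relaxation rigorous, including control of the discretization error and the factor relating empirical eigenvalues to operator eigenvalues. I would import this comparison verbatim from~\cite{NS-AK-SMK-MS:12}, which states it for a general kernel, and verify only the decay estimate of Step 3 for~\eqref{def: stker}. Because the vertex set $V$ is finite and contained in $D$, the maximum over $X_n\in V^n$ is at most the maximum over $D^n$, so the bound transfers directly to the graph setting.
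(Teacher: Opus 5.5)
Your proposal takes essentially the paper's route: the paper gives no argument beyond invoking \cite[Th.~5]{NS-AK-SMK-MS:12}, and you unpack that theorem's own proof (greedy/submodular relaxation on a discretization, truncation at $T_*$, exponential eigenvalue decay) while importing its discretization step, so the conclusion holds. Only minor repairs are needed. First, the factor $\log(1+\sigma_v^2/\sigma^2)$ does not come out of the rescaling, since your head term is $T_*\log(1+n\sigma_v^2/\sigma^2)$ rather than a product; it is just a constant absorbed into $O(\cdot)$ for fixed $\sigma_v,\sigma$. Second, the $\hat\lambda_t$ in the relaxation should be Gram-matrix eigenvalues on $D_T$, roughly $|D_T|$ times the operator eigenvalues, which only turns $\log n$ into $\log(n|D_T|)=O(\log n)$. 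Third, $T_*=(\log n)^d$ makes the tail $O(1)$ only when $B\le e^{-1}$, so take $T_*=(c\log n)^d$ with $c$ large enough depending on $B$ and on the polynomial size of $D_T$.
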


For the multitask Gaussian process $\bvec{\Phi}$, given a sequence of vertices $X_n$ and the corresponding multitask observations $\bvec{Y}_n$ with sampling noise variance $\sigma^2$, we define the $n$-step multitask maximal mutual information gain with inter-task covariance matrix $\bvec{K}$ as
\begin{equation}
    \gamma_n^{\bvec{K}}(\sigma^2) := \max_{X_n \in V^n } \mathrm{I}_{X_n}\left(\bvec{Y}_{n} ; \bvec{\Phi}\right). \label{def: max_mt_info}
\end{equation}

We now establish a relationship between $\gamma_n^{\bvec{K}}$ and $\supscr{\gamma}{sngl}_n$.
 
\begin{lemma}[Multitask Maximal Mutual Information Gain]\label{lem:multitask-info-gain}
Given the $n$-step single task maximum information $\supscr{\gamma}{sngl}_n$ associated with covariance function $\bvec{\Sigma}_0$ and the inter-task covariance matrix $\bvec{K}$, the $n$-step multitask maximum information gain $ \gamma_n^{\bvec{K}}$ associated with covariance function $\bvec{\tilde{\Sigma}}_0$ satisfies 
	\[ 
 \gamma_n^{\bvec{K}} (\sigma^2) \leq \sum_{j=1}^M \supscr{\gamma}{sngl}_{n}(\sigma^2/\supscr{\lambda_j}{tsk}),
 \]
 where $\supscr{\lambda_1}{tsk},\ldots,\supscr{\lambda_M}{tsk}$ are the eigenvalues of $\bvec{K}$.
\end{lemma}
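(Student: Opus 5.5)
We first write the multitask mutual information as a log-determinant and then diagonalize the inter-task covariance $\bvec{K}$. Fix a sequence $X_n=(v_{s_1},\ldots,v_{s_n})$ and let $S\in\{0,1\}^{n\times\abs{V}}$ be the selection matrix with rows $\bvec{e}_{s_k}^\top$. The stacked observation vector is $\bvec{Y}_n=(S\otimes I_M)\bvec{\Phi}+\bvec{\epsilon}$. Under \cref{ass: prior}, its covariance is $(S\otimes I_M)(\bvec{\Sigma}_0\otimes\bvec{K})(S\otimes I_M)^\top+\sigma^2 I_{nM}=\bvec{\Sigma}_{X_n}\otimes\bvec{K}+\sigma^2I_{nM}$, where $\bvec{\Sigma}_{X_n}=S\bvec{\Sigma}_0S^\top$ is the $n\times n$ single-task Gram matrix on $X_n$. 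Hence, exactly as in the single-task case (and consistently with the chain-rule expression~\eqref{def: multmi}),
\[
\mathrm{I}_{X_n}(\bvec{Y}_n;\bvec{\Phi})=\tfrac12\log\abs{I_{nM}+\sigma^{-2}\,\bvec{\Sigma}_{X_n}\otimes\bvec{K}}.
\]

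Next, take the eigendecomposition $\bvec{K}=Q\Lambda Q^\top$ with $\Lambda=\mathrm{diag}(\supscr{\lambda_1}{tsk},\ldots,\supscr{\lambda_M}{tsk})$ and $Q$ orthogonal. Then $I_n\otimes Q$ is orthogonal, and conjugating by it does not change the determinant:
\[
\abs{I_{nM}+\sigma^{-2}\bvec{\Sigma}_{X_n}\otimes\bvec{K}}=\abs{I_{nM}+\sigma^{-2}\bvec{\Sigma}_{X_n}\otimes\Lambda}.
\]
A perfect-shuffle permutation, which also preserves the determinant, turns $\bvec{\Sigma}_{X_n}\otimes\Lambda$ into $\Lambda\otimes\bvec{\Sigma}_{X_n}$. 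The latter is block diagonal with blocks $\supscr{\lambda_j}{tsk}\bvec{\Sigma}_{X_n}$. Therefore
\[
\mathrm{I}_{X_n}(\bvec{Y}_n;\bvec{\Phi})=\sum_{j=1}^M\tfrac12\log\abs{I_n+(\sigma^2/\supscr{\lambda_j}{tsk})^{-1}\bvec{\Sigma}_{X_n}}.
\]
Each summand is the single-task mutual information for the same sampling sequence $X_n$, with noise variance $\sigma^2/\supscr{\lambda_j}{tsk}$. If $\supscr{\lambda_j}{tsk}=0$, the summand is $0$; we interpret $\supscr{\gamma}{sngl}_n(\infty)=0$, which is consistent because $\bvec{K}$ is positive semidefinite.

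Finally, each summand is at most its maximum over $X_n\in V^n$, namely $\supscr{\gamma}{sngl}_n(\sigma^2/\supscr{\lambda_j}{tsk})$. Maximizing the left-hand side over $X_n$ as in~\eqref{def: max_mt_info}, and using that a maximum of a sum is at most the sum of the individual maxima, gives the claimed bound.

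The main obstacle is only bookkeeping. We must verify that the covariance of the stacked observations factors as $\bvec{\Sigma}_{X_n}\otimes\bvec{K}$ under the ordering of $\bvec{\Phi}$ (vertex-major, task-minor), including the case of repeated vertices in $X_n$. Using the selection matrix $S$ handles repeated vertices automatically. The eigen-decoupling step itself is routine.
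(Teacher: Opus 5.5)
Your proposal is correct and follows essentially the same route as the paper. Both write the mutual information as $\frac12\log\abs{I_{Mn}+\sigma^{-2}\bvec{\Sigma}_{X_n}\otimes\bvec{K}}$, decouple it through the eigenstructure of the Kronecker product into $\sum_{j}\frac12\log\abs{I_n+\sigma^{-2}\supscr{\lambda_j}{tsk}\bvec{\Sigma}_{X_n}}$, and bound the maximum of the sum by the sum of the maxima. Your conjugation by $I_n\otimes Q$ with a perfect shuffle, the selection-matrix derivation (which covers repeated vertices), and the convention for $\supscr{\lambda_j}{tsk}=0$ are slightly more careful bookkeeping than the paper's direct appeal to the Kronecker eigenvalues $\lambda_i\supscr{\lambda_j}{tsk}$.
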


\begin{proof}
    Let $\mathrm{vec}(\bvec{Y_n})$ be the $M\times n$ dimensional vector constructed by concatenating $\bvec{y}_1,\ldots, \bvec{y}_n$. Then, the covariance matrix of $\mathrm{vec}(\bvec{Y_n})$ becomes $\bvec{\Sigma}_{X_n} \otimes \bvec{K}$, where the $(i,j)$-th element equals the $(s_i,s_j)$-th element of $\bvec{\Sigma}_0$.
    \begin{align*}
        \mathrm{I}_{X_n} \left(\bvec{Y}_{n}; \bvec{\Phi}  \right) & =  \frac{1}{2} \log \abs{I_{Mn} + \sigma^{-2}\bvec{\Sigma}_{X_n} \otimes \bvec{K}}
    \end{align*}
    Let $\lambda_1,\ldots,\lambda_n$ be the eigenvalues of $\bvec{\Sigma}_{X_n}$.
    According to the property of the Kronecker product, 
    \begin{align*}
        \mathrm{I}_{X_n} \left(\bvec{Y}_{n}; \bvec{\Phi}  \right) &= \sum_{j=1}^{M} \sum_{i=1}^{n} \log (1+\sigma^{-2} \lambda_i \supscr{\lambda_j }{tsk}) \\
        &= \sum_{j=1}^{M} \frac{1}{2} \log \abs{I_{n} + \sigma^{-2} \supscr{\lambda_j }{tsk} \bvec{\Sigma}_{X_n}}. 
    \end{align*}
Thus, 
    \begin{align*}
       \gamma_n^{\bvec{K}} & = \max_{X_n}   \;   \mathrm{I}_{X_n} \left(\bvec{Y}_{n}; \bvec{\Phi} \right) \\
      &=  \sum_{j=1}^{M} \frac{1}{2} \log \abs{I_{n} + \sigma^{-2} \supscr{\lambda_j }{tsk} \bvec{\Sigma}_{X_n}} \\
      & \le \! \sum_{j=1}^{M} \max_{X_n}  \frac{1}{2} \log \abs{I_{n} + \sigma^{-2} \supscr{\lambda_j }{tsk} \bvec{\Sigma}_{X_n}} \! = \!\sum_{j=1}^M  \supscr{\gamma}{sngl}_{n}(\sigma^2/\supscr{\lambda_j}{tsk}).
    \end{align*}
\end{proof}
\begin{remark}
    Substituting Lemma~\ref{lemma: parmi} into Lemma~\ref{lem:multitask-info-gain}, the inter-task covariance matrix $\bvec{K}$ introduces a scaling factor $\sum_{j=1}^M  \log\big(1 + \supscr{\lambda_j}{tsk}\sigma_v^2/\sigma^2\big)\big)$ on the multitask maximal mutual information gain $\gamma_n^{\bvec{K}} (\sigma^2)$, which decreases as the correlation between tasks increases.
\end{remark}


\subsubsection{Uncertainty Reduction}

We now present a bound on the trace of multitask covariance at each node after sampling at vertices within $X_n$. Recall that the demand vector $\bvec{\Phi}$ has a multivariate Gaussian prior distribution $\mathcal{N}( \bvec{\tilde{\mu}}_0 , \bvec{\Sigma}_0 \otimes \bvec{K})$.

\begin{lemma}[Uncertainty reduction]\label{lemma: ur}
        Let $\bar{\sigma}_0^2$ be the maximum diagonal term of $\bvec{\Sigma}_0 $ and  $\supscr{\lambda_1}{tsk} \geq \supscr{\lambda_2}{tsk} \ldots \geq \supscr{\lambda_M}{tsk}$ be the eigenvalues of $\bvec{K}$. Under the greedy sampling policy, the trace of the multitask covariance matrix after $n$ sampling rounds satisfies
	\begin{equation*}
	\max_{i \in V} \tr\big(\bvec{\tilde{\Sigma}}_i(n) \big)  \leq \frac{2\bar{\sigma}_0^2 \supscr{\lambda_1}{tsk}}{\log \big(1 + \sigma^{-2} \bar{\sigma}_0^2 \supscr{\lambda_1}{tsk} \big)} \frac{\gamma^{\bvec{K}}_n}{n}.
	\end{equation*}
\end{lemma}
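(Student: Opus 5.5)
The plan is the standard GP-UCB-style argument (as in Srinivas et al.), adapted to the $M\times M$ block structure. The quantity driving everything is the greedy objective $\log\abs{I_M+\sigma^{-2}\bvec{\tilde{\Sigma}}_i(k-1)}$. I would (i) relate the trace of a block to this log-determinant by a concavity inequality, (ii) use greedy optimality together with monotonicity of posterior covariance to compare the final maximum trace with every earlier greedy step, and (iii) sum over steps and bound the sum by $\gamma_n^{\bvec{K}}$ via~\eqref{def: multmi}.

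\textbf{Step 1 (trace versus log-determinant).} Fix a vertex $i$ and a step $k$. Let $\mu_1,\dots,\mu_M$ be the eigenvalues of $\bvec{\tilde{\Sigma}}_i(k)$. Since posterior covariance is dominated by the prior block $\bar{\sigma}_0^2$-scaled $\bvec{K}$, each $\mu_m\le \bar{\sigma}_0^2\supscr{\lambda_1}{tsk}=:c$. On $[0,c]$ the function $x\mapsto\log(1+\sigma^{-2}x)$ is concave and vanishes at $0$, so
\[
x\le \frac{c}{\log(1+\sigma^{-2}c)}\log(1+\sigma^{-2}x).
\]
Summing over $m$ gives
\[
\tr\big(\bvec{\tilde{\Sigma}}_i(k)\big)\le \frac{c}{\log(1+\sigma^{-2}c)}\log\abs{I_M+\sigma^{-2}\bvec{\tilde{\Sigma}}_i(k)}.
\]

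\textbf{Step 2 (monotonicity and greedy choice).} Posterior covariances are nonincreasing in the Loewner order as samples are added; this follows from~\eqref{mulposterior}, since adding samples only increases the precision. Hence for every $k\le n$,
\[
\log\abs{I_M+\sigma^{-2}\bvec{\tilde{\Sigma}}_i(n)}\le \log\abs{I_M+\sigma^{-2}\bvec{\tilde{\Sigma}}_i(k-1)}\le \log\abs{I_M+\sigma^{-2}\bvec{\tilde{\Sigma}}_{s_k}(k-1)},
\]
where the last inequality is the greedy rule~\eqref{eq:greedy}. Combining with Step 1 and averaging over $k=1,\dots,n$ yields
\[
\max_i \tr\big(\bvec{\tilde{\Sigma}}_i(n)\big)\le \frac{c}{\log(1+\sigma^{-2}c)}\cdot\frac1n\sum_{k=1}^n \log\abs{I_M+\sigma^{-2}\bvec{\tilde{\Sigma}}_{s_k}(k-1)}.
\]

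\textbf{Step 3 (sum to information gain).} By~\eqref{def: multmi}, this sum equals $2\,\mathrm{I}_{X_n}(\bvec{Y}_n;\bvec{\Phi})\le 2\gamma_n^{\bvec{K}}$. This gives the stated factor $2$ and completes the bound.

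\textbf{Main obstacle.} The main obstacle is the eigenvalue bound $\mu_m\le \bar{\sigma}_0^2\supscr{\lambda_1}{tsk}$ in Step 1. The prior block at vertex $i$ is $(\bvec{\Sigma}_0)_{ii}\bvec{K}$, because the diagonal block of the Kronecker product is the scalar $(\bvec{\Sigma}_0)_{ii}$ times $\bvec{K}$. Its largest eigenvalue is therefore at most $\bar{\sigma}_0^2\supscr{\lambda_1}{tsk}$, and Loewner monotonicity of the posterior carries this bound to all $k$. I would check carefully that the Kronecker ordering in Assumption~\ref{ass: prior} matches the per-vertex block used in~\eqref{def: multmi}. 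If the ordering were reversed, the blocks would not be scalar multiples of $\bvec{K}$ and the bound would need a separate argument.
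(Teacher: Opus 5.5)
Your proof is correct and follows essentially the same route as the paper. It uses the concavity bound on the block eigenvalues (each at most $\bar{\sigma}_0^2\supscr{\lambda_1}{tsk}$ by Loewner monotonicity), the greedy rule combined with per-vertex monotonicity, and the chain-rule identity~\eqref{def: multmi}. The only difference is cosmetic: the paper first shows the greedy values $\log\abs{I_M+\sigma^{-2}\bvec{\tilde{\Sigma}}_{s_k}(k-1)}$ are nonincreasing in $k$ and bounds the last one by their average, whereas you compare the time-$n$ value directly with every greedy term and then average.

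Your ordering worry also resolves. $\bvec{\Phi}$ is stacked vertex by vertex, and~\eqref{mulposterior} uses $\bvec{e}_i\bvec{e}_i^{\top}\otimes I_M$, so the prior block at $v_i$ is exactly $(\bvec{\Sigma}_0)_{ii}\bvec{K}$.
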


\smallskip 
\begin{proof} 
    For any $i \in V$, $ \big| {I_M + \sigma^{-2} \bvec{\tilde{\Sigma}}_{i}(k)}\big|$ is non-increasing with $k$. Hence, with $s_k$ defined in~\eqref{eq:greedy}, we have
    \begin{equation}\label{ineq: variance}
	\begin{split}
	\abs {I_M + \sigma^{-2} \bvec{\tilde{\Sigma}}_{s_{k+1}}(k)} &\leq \abs {I_M + \sigma^{-2} \bvec{\tilde{\Sigma}}_{s_{k+1}}(k-1)} \\
	&\leq \max_{i \in V} \abs {I_M + \sigma^{-2} \bvec{\tilde{\Sigma}}_{i}(k - 1)}\\
        & = \abs {I_M + \sigma^{-2} \bvec{\tilde{\Sigma}}_{s_{k}}(k-1)}
	\end{split}	
    \end{equation}
    which indicates that $\abs {I_M + \sigma^{-2} \bvec{\tilde{\Sigma}}_{s_{k}}(k-1)}$ is monotonically non-increasing with $k$. Hence, from~\eqref{def: multmi} and~\eqref{def: max_mt_info},
    \begin{equation} \label{eq:inf_bound}
        \log \abs {I_M + \sigma^{-2} \bvec{\tilde{\Sigma}}_{s_n}(n-1)}  \leq 2\gamma^{\bvec{K}}_n/n.
    \end{equation}
    
    For any $i \in V$, let $\lambda_i^1(k)\geq \lambda_i^2(k) \ldots \geq \lambda_i^M (n)$  be the eigenvalues of $\bvec{\tilde{\Sigma}}_{i}(n)$. Then we have
    \begin{equation} \label{eq: info}
        \log \abs {I_M + \sigma^{-2} \bvec{\tilde{\Sigma}}_{i}(n)} = \sum_{j =1}^M \log \big(1 + \sigma^{-2} \lambda_{i}^j (n) \big).
    \end{equation}
    Since ${x}/{\log \left(1+x\right)}$ is an increasing function for $ x \in [0,\infty)$, 
    \begin{align*}
        \lambda_{i}^j (n) \leq \frac{\bar{\sigma}_0^2 \supscr{\lambda_1}{tsk}}{\log \big(1 + \sigma^{-2} \bar{\sigma}_0^2 \supscr{\lambda_1}{tsk} \big)} \log \big(1 + \sigma^{-2} \lambda_{i}^j (n) \big),
    \end{align*}
    where we apply the fact $\lambda_i^M (k) \leq \lambda_i^M (0) \leq \bar{\sigma}_0^2 \supscr{\lambda_1}{tsk}$ for any $i \in V$. Putting the above inequality together with~\eqref{eq: info},
    \begin{align*}
        &\max_{i \in V} \tr\big(\bvec{\tilde{\Sigma}}_i(n) \big) = \max_{i \in V} \sum_{j=1}^M \lambda_{i}^j (n)\\
        \leq & \frac{\bar{\sigma}_0^2 \supscr{\lambda_1}{tsk}}{\log \big(1 + \sigma^{-2} \bar{\sigma}_0^2 \supscr{\lambda_1}{tsk} \big)}  \max_{i \in V} \log \abs {I_M + \sigma^{-2} \bvec{\tilde{\Sigma}}_{i}(n)} \\
         \leq & \frac{\bar{\sigma}_0^2 \supscr{\lambda_1}{tsk}}{\log \big(1 + \sigma^{-2} \bar{\sigma}_0^2 \supscr{\lambda_1}{tsk} \big)}  \max_{i \in V} \log \abs {I_M + \sigma^{-2} \bvec{\tilde{\Sigma}}_{i}(n - 1)},
    \end{align*}
    where the last step is due to~\eqref{ineq: variance}. Substituting~\eqref{eq:inf_bound} into the above equation, we conclude the proof
\end{proof}

\begin{remark}
If the correlation information is ignored, i.e., $\Phi(v)$ for each $v \in V$ are treated as independent, it follows that under a greedy sampling policy $\max_{i \in V} \tr\big(\bvec{\tilde{\Sigma}}_i(n) \big) \in O (\abs{V}/n)$. In contrast, when the correlation structure is taken into account, substituting the results of Lemmas~\ref{lemma: parmi} and~\ref{lem:multitask-info-gain} into Lemma~\ref{lemma: ur} yields $\max_{i \in V} \tr\big(\bvec{\tilde{\Sigma}}_i(n) \big) \in O ((\log (n))^3/n)$, which eliminates the dependence on $|V|$. This highlights the substantial advantage of exploiting spatial correlations, especially when the environment is finely discretized.
\end{remark}

\subsubsection{An Upper Bound on Expected Multitask Coverage Regret}

 We now present an upper bound on the expected multitask coverage regret for the DSMLC algorithm. 

\begin{theorem}\label{theorem:regret}
	For DSMLC and any time horizon $T$, if Assumption~\ref{assum: ig} holds and $\alpha= \beta^{-2/3}$, then the expected cumulative coverage regret with respect demand function $\bvec{\Phi}$ satisfies
	\[ 
 \expt \Bigg[ \sum_{t=1}^{T} \mathcal{R}_t(\bvec{\Phi})\Bigg]\in O \big (T^{2/3} (\log(T))^3 \big).
 \]
\end{theorem}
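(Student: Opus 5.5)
The argument is an epoch-by-epoch decomposition of the regret. We bound the regret incurred in each epoch $\ell$ of DSMLC and then sum over the epochs needed to reach horizon $T$.

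\textbf{Epoch length.} Epoch $\ell$ has three parts: an exploration phase of $n_\ell$ sampling rounds, an information propagation phase of bounded duration, and a coverage phase of $\lceil\beta^\ell\rceil$ steps. Exploration stops at the first $n$ with $\max_i\tr(\tilde{\bvec{\Sigma}}_i(n))\le\alpha^\ell\tau$. Combining Lemma~\ref{lemma: ur} with Lemmas~\ref{lem:multitask-info-gain} and~\ref{lemma: parmi} (here $d=2$), this holds once $C(\log n)^3/n\le\alpha^\ell\tau$. Hence $n_\ell\in O(\alpha^{-\ell}\ell^3)$. Executing these samples by vehicle routing on the finite graph $G$ costs time at most $O(n_\ell)$, up to a constant depending on $\operatorname{diam}(G)$.

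Every instantaneous regret $\mathcal{R}_t(\bvec{\Phi})$ is bounded by $2\mathcal{H}\le 2MN|V|\max f\cdot\|\bvec{\Phi}\|_\infty$. Its expectation is therefore bounded by a constant $\bar R$ depending on the prior, since Gaussian moments are finite. Consequently, exploration and propagation in epoch $\ell$ contribute $O(\alpha^{-\ell}\ell^3)$ expected regret.

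\textbf{Coverage phase.} We split this phase into a transient part and a steady state. By Lemma~\ref{lem:convergence-coverage} applied to the estimate $\hat{\bvec{\Phi}}$, Algorithm~\ref{algo:FMC} reaches a multitask centroidal equitable partition for $\hat{\bvec{\Phi}}$ within a number of steps bounded by a constant $\bar K$. The constant is uniform, because the Lyapunov functions $U_1,U_2,U_3$ take finitely many values on a finite state space.

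The transient therefore contributes at most $\bar K\bar R$ per epoch. Once the configuration has converged, $\bvec{\mathcal{P}}_t=\mathcal{V}(\bvec{\eta}_t)$ and $\bvec{\eta}_t=\bvec{c}(\bvec{\mathcal{P}}_t)$ hold with respect to $\hat{\bvec{\Phi}}$. The regret computed with the true $\bvec{\Phi}$ then comes only from the estimation error. Since $\mathcal{H}$ is linear in $\bvec{\Phi}$, write $\mathcal{H}_{\bvec{\Phi}}=\mathcal{H}_{\hat{\bvec{\Phi}}}+\mathcal{H}_{\bvec{\Phi}-\hat{\bvec{\Phi}}}$. The minimizers for $\hat{\bvec{\Phi}}$ give regret zero under $\hat{\bvec{\Phi}}$. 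The gap to the minimizers under $\bvec{\Phi}$ (the center and equitable partition are defined through $\bvec{\Phi}$) is then bounded, via the usual sandwich, by $C'\sum_{v,j}|\phi^j(v)-\hat\phi^j(v)|$.

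Conditioned on the data, $\bvec{\Phi}-\hat{\bvec{\Phi}}$ is zero-mean Gaussian with covariance $\tilde{\bvec{\Sigma}}(t)$. Hence $\expt|\phi^j(v)-\hat\phi^j(v)|\le\sqrt{\tr\tilde{\bvec{\Sigma}}_v}\le\sqrt{\alpha^\ell\tau}$. The steady-state part of the coverage phase therefore contributes $O(\beta^\ell\alpha^{\ell/2})$.

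\textbf{Summation.} With $\alpha=\beta^{-2/3}$, epoch $\ell$ contributes
\[
O\big(\beta^{2\ell/3}\ell^3+\beta^{2\ell/3}+1\big).
\]
The horizon satisfies $T\ge\sum_{\ell\le L-1}\beta^\ell$, so the number of epochs is $L\le\log_\beta T+O(1)$ and $\beta^{L}\in O(T)$. Summing the geometric series gives $O(\beta^{2L/3}L^3)=O(T^{2/3}(\log T)^3)$.

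\textbf{Main obstacle.} The main obstacle is the steady-state step, i.e., bounding the regret at a configuration optimal for $\hat{\bvec{\Phi}}$ by the $\ell_1$ estimation error. The regret in Definition~\ref{def:mulregret} compares against $\bvec{c}(\bvec{\mathcal{P}}_t)$ and $\mathcal{V}(\bvec{\eta}_t)$ defined with the true $\bvec{\Phi}$. I would handle each term separately. For the first term, $\mathcal{H}_{\bvec{\Phi}}(\bvec{\eta},\bvec{\mathcal{P}})-\min_c\mathcal{H}_{\bvec{\Phi}}(c,\bvec{\mathcal{P}})$ is at most $\mathcal{H}_{\hat{\bvec{\Phi}}}(\bvec{\eta},\bvec{\mathcal{P}})-\mathcal{H}_{\hat{\bvec{\Phi}}}(c^*,\bvec{\mathcal{P}})$ plus twice the perturbation. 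Optimality of $\bvec{\eta}$ under $\hat{\bvec{\Phi}}$ makes the first difference nonpositive. The partition term is handled the same way, because the equitable assignment in Definition~\ref{def:multitask-partitions} does not depend on $\bvec{\Phi}$ at all.

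A secondary point to check is that $\hat{\bvec{\Phi}}$ may have negative entries. Lemma~\ref{lem:convergence-coverage} still applies to it, since its argument only used the finiteness of the state space.
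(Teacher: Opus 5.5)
Your proof is correct and has the same architecture as the paper's. The pieces match: exploration cost of order $\alpha^{-\ell}$ up to polylog factors from Lemmas~\ref{lemma: parmi}--\ref{lemma: ur}; bounded propagation and transient cost per epoch via Lemma~\ref{lem:convergence-coverage}; a steady-state cost of order $(\beta\sqrt{\alpha})^{\ell}$; and a geometric sum with $\alpha=\beta^{-2/3}$.

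The one step where you genuinely diverge is the steady-state bound. The paper writes $\mathcal{R}_t(\bvec{\Phi})=A_t^{\top}\bvec{\Phi}$ for a fixed vector $A_t$ and argues $A_t^{\top}\bvec{\tilde{\mu}}(t)=0$. It then uses the Gaussian identity $\expt\abs{\mathcal{R}_t}=\sqrt{(2/\pi)A_t^{\top}\bvec{\tilde{\Sigma}}(t)A_t}$ together with $\tr(\bvec{\tilde{\Sigma}}(t))\le\abs{V}\alpha^{k}\tau$. You instead decompose $\mathcal{H}_{\bvec{\Phi}}=\mathcal{H}_{\hat{\bvec{\Phi}}}+\mathcal{H}_{\bvec{\Phi}-\hat{\bvec{\Phi}}}$ and discard the first piece by optimality under $\hat{\bvec{\Phi}}$. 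You bound the rest by $C'\|\bvec{\Phi}-\hat{\bvec{\Phi}}\|_1$, whose conditional expectation is controlled coordinatewise by posterior standard deviations.

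Both routes give the same order; only $\abs{V}$- and $M$-dependent constants differ. Yours is the more careful one. In Definition~\ref{def:mulregret}, $\bvec{c}(\bvec{\mathcal{P}}_t)$ is the minimizer under the true $\bvec{\Phi}$, so the regret is only piecewise linear (indeed convex) in $\bvec{\Phi}$. The paper's $A_t$ therefore depends on $\bvec{\Phi}$, and its claim that $\mathcal{R}_t(\bvec{\Phi})$ is posterior-Gaussian with mean zero is not literally justified. By convexity and Jensen, the conditional mean of the regret is in general strictly positive. Your sandwich argument needs no linearity at all.

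One caveat concerns your side remark on negative entries of $\hat{\bvec{\Phi}}$. Lemma~\ref{lem:convergence-coverage} does not rest on finiteness alone: its Lyapunov functions are $\phi$-weighted, and strict decrease is what drives convergence. More importantly, your first bracket needs a fixed point of the $\subscr{\mathcal{H}}{inf}$ updates to be a multitask center for $\hat{\bvec{\Phi}}$. That step uses $\subscr{\mathcal{H}}{inf}(\bvec{\eta}')\le\mathcal{H}(\bvec{\eta}',\bvec{\mathcal{P}})$, which requires $\hat{\phi}^j(v)\ge 0$. This is easily repaired, for example by running the coverage phase on the entrywise positive part $\max(\hat{\bvec{\Phi}},0)$. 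Since $\bvec{\Phi}\ge 0$, that can only shrink the $\ell_1$ error. The paper does not address this point either.
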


\medskip

\begin{proof}
	We establish the theorem using the following four steps. 
	
	\noindent \textbf{Step 1 (Regret from estimation phases):}
	Let the total number of sampling steps before the end of the $j$-th epoch be $s_j$. By applying Lemmas~\ref{lemma: parmi}-\ref{lemma: ur}, we get 
	\[s_j  \in  O ({(\log(T))^3}/{\alpha^j}).\]
Thus, the coverage regret in the estimation phases until the end of the $j$-th epoch belongs to $O ({(\log(T))^3}/{\alpha^j})$. 
	
	\noindent \textbf{Step 2 (Regret from information propagation phases):}
	The sampling information by each robot is transmitted to the base station. Under the federated communication model, the base station receives information from all robots in a finite number of communication rounds.  
 Thus, before the end of the $j$-th epoch, the coverage regret from information propagation phases can be bounded by $c_1 j$ for some constant $c_1>0$.	
	
	\noindent \textbf{Step 3 (Regret from coverage phases):}
	According to Lemma~\ref{lem:convergence-coverage}, in each coverage phase, the expected time before converging to a multitask centroidal equitable partition is finite. Thus, before the end of the $j$-th epoch, the expected coverage regret from converging steps can be upper bounded by $c_2 j$ for some constant $c_2>0$.
	
	Also, note that the robot team converges to 
 a multitask centroidal equitable partition for estimated demand function $\hat{\bvec{\Phi}}$ which may deviate from the actual $\bvec{\Phi}$. 
 Since $H$ is linear in the demand function $\bvec{\Phi}$, the instantaneous coverage regret $\mathcal{R}_t(\bvec{\Phi})$ 
can be expressed as a linear functional 
 \begin{align*}
2\mathcal{H}(\bvec{\eta}_t, \bvec{\mathcal{P}}_t) - \mathcal{H}(\bvec{c}(\bvec{\mathcal{P}}_t), \bvec{\mathcal{P}}_t) - \mathcal{H}(\bvec{\eta}_t, \mathcal{V}(\bvec{\eta}_t)):= A_t^{\top} \bvec{\Phi},
	\end{align*}
	for some $A_t \in \real^{M\abs{V}}$. Moreover, the posterior distribution of $\mathcal{R}_t(\bvec{\Phi})$ can be written as $\mathcal{N} (A_t^{\top} \bvec{\tilde \mu}(t), A_t^{\top} \bvec{\tilde \Sigma}(t) A_t) $.

 Since $\hat{\bvec{\Phi}} = \bvec{\tilde \mu}(t)$, and at the associated multitask centroidal equitable partition, the multitask coverage regret should be zero, we have $A_t^{\top} \bvec{\tilde \mu}(t) = 0$. 

	Now, we get $\mathcal{R}_t(\bvec{\Phi}) \sim \mathcal{N} (0, A_t^{\top} \bvec{\tilde \Sigma}(t) A_t)$ and
	\[
 \expt[\mathcal{R}_t(\bvec{\Phi})] \leq \expt\left[\abs{\mathcal{R}_t(\bvec{\Phi})}\right] =  \sqrt{\frac{2}{\pi} A_t^{\top} \bvec{\tilde \Sigma}(t) A_t}. \]
	Note that $A_t^{\top} \bvec{\tilde \Sigma}(t) A_t$ is weighted summation of eigenvalues of $\bvec{\tilde \Sigma}(t)$. At any time $t$ in the coverage phase of the $k$-th epoch, $\max_{i\in V}\tr\big(\bvec{\tilde{\Sigma}}_i(t) \leq \alpha^{k}\tau$, and its follows that the summation of eigenvalues of $\bvec{\tilde \Sigma}(t)$ equals $\tr(\bvec{\tilde \Sigma}(t)) \leq \abs{V}\alpha^{k}\tau$. Thus,
	\[
 \expt \Bigg [\sum_{t\in \supscr{\mc T_k}{cov}}  \mathcal{R}_t(\bvec{\Phi}) \Bigg] \leq c_3 (\beta \sqrt{\alpha})^{k},  
 \]
	for some constant $c_3>0$, where $\supscr{\mc T_k}{cov}$ are the time slots in the coverage phase of the $k$-th epoch after the convergence of the coverage algorithm. We also use the fact that $|\supscr{\mc T_k}{cov}| \le \lceil \beta^k \rceil$.

	\noindent \textbf{Step 4 (Summary):} Summing up the expected coverage regret from the above steps, the expected cumulative coverage regret at the end of the $\ell$-th epoch $T_\ell$ satisfies
	\begin{align}\label{eq:regret-bound}
		\expt \Bigg[ \sum_{t=1}^{T_\ell} \mathcal{R}_t(\bvec{\Phi})\Bigg] \leq C_1 \ell +C_2 s_\ell +  \sum_{k=1}^\ell c_3 (\beta\sqrt{\alpha})^{k}, 
	\end{align}
	where $C_1, C_2 >0$ are some constants. 
	Since the number of epochs up to time $T$ satisfies $\ell \in O(\log T)$, substituting $\alpha= \beta^{-2/3}$ into~\eqref{eq:regret-bound}, and summing the resulting geometric series yields the claimed bound.
\end{proof}

\begin{remark}[Extensions of the proposed algorithm]\label{rem:extensions}
The proposed DSMLC algorithm serves as a template algorithm applicable to various environments and coverage algorithms, particularly in the single-task context, where numerous coverage algorithms have been developed. In the preliminary version of this work~\cite{LW-AM-VS:20h}, we established similar guarantees for the gossip coverage algorithm and graph environments. For continuous environments, while the non-parametric estimation framework can be immediately extended, the coverage algorithms may not converge in finite time~\cite{Cortes2004}. Nevertheless, our regret analysis can be easily adapted to continuous domains in scenarios where the coverage algorithm converges exponentially, such as in one-dimensional coverage problems~\cite{du2006convergence}. Additionally, multi-fidelity sensing can be incorporated similarly to~\cite{AM-LW-VS:21b, wei2020expedited}.
\end{remark}


\section{Simulation Study}
\label{sec:simulations}

We evaluate the proposed federated multitask coverage architecture in a heterogeneous firefighting scenario.
The environment is modeled as a $21\times 21$ connected grid graph.

We consider $N=9$ robots and $M=2$ task types: (i) Task 1: monitoring, and (ii) Task 2: fire suppression. The true sensory demand fields $\phi^j(v)$ are generated as mixtures of Gaussian kernels over the grid and normalized.  
The resulting spatial distributions are shown in Fig.~\ref{fig:true-demand-two}.




\begin{figure}[ht!]
\centering
\includegraphics[width=\linewidth]{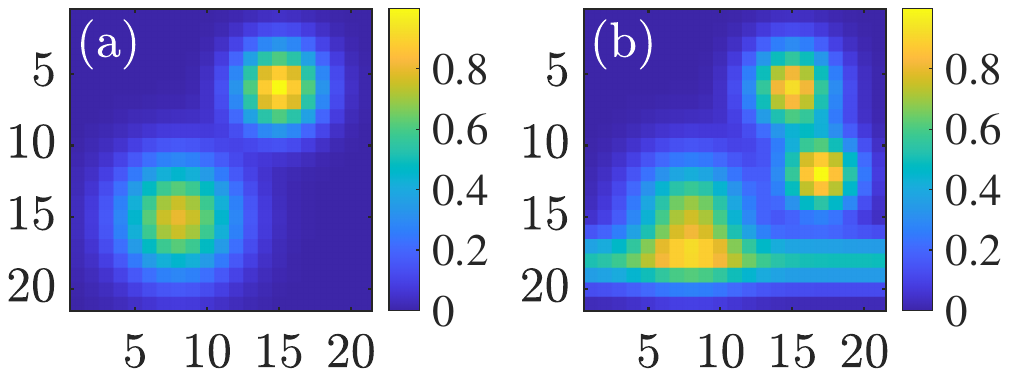}
\caption{True demand fields: (a) $\phi^1$ and (b) $\phi^2$.}
\label{fig:true-demand-two}
\end{figure}

Robot-task heterogeneity is encoded through the cost functions $f_i^j(d) = a_{ij}\, d$, which correspond to the linear distance-dependent cost of service. The coefficients $a_{ij} > 0$ represent task-dependent robot effectiveness and are generated randomly in each experiment. For Task~1 (monitoring), coefficients are generated as $a_{i1}=\max(0.25,\,1.0+0.2\xi_i)$ with $\xi_i\sim\mathcal{N}(0,1)$. 
For Task~2 (fire suppression), $6$ robots without dedicated firefighting capability use 
$a_{i2}=\max(0.25,\,2.3+0.25\xi_i)$, 
while $3$ firefighting-capable robots use 
$a_{i2}=\max(0.25,\,1.5+0.25\xi_i)$. 
This construction captures a higher baseline suppression cost for most robots while assigning a lower cost to specialized firefighting units.

\noindent
\textbf{Federated Multitask Coverage:} 
We first illustrate Algorithm~\ref{algo:FMC} under known sensory demand functions.  
Fig.~\ref{fig:federated-known} shows the evolution of the coverage cost, while Fig.~\ref{fig:two-snapshot} shows the optimal deployment and partitions to which the algorithm converges. 

\begin{figure}[ht!]
\centering
\includegraphics[width=\linewidth]{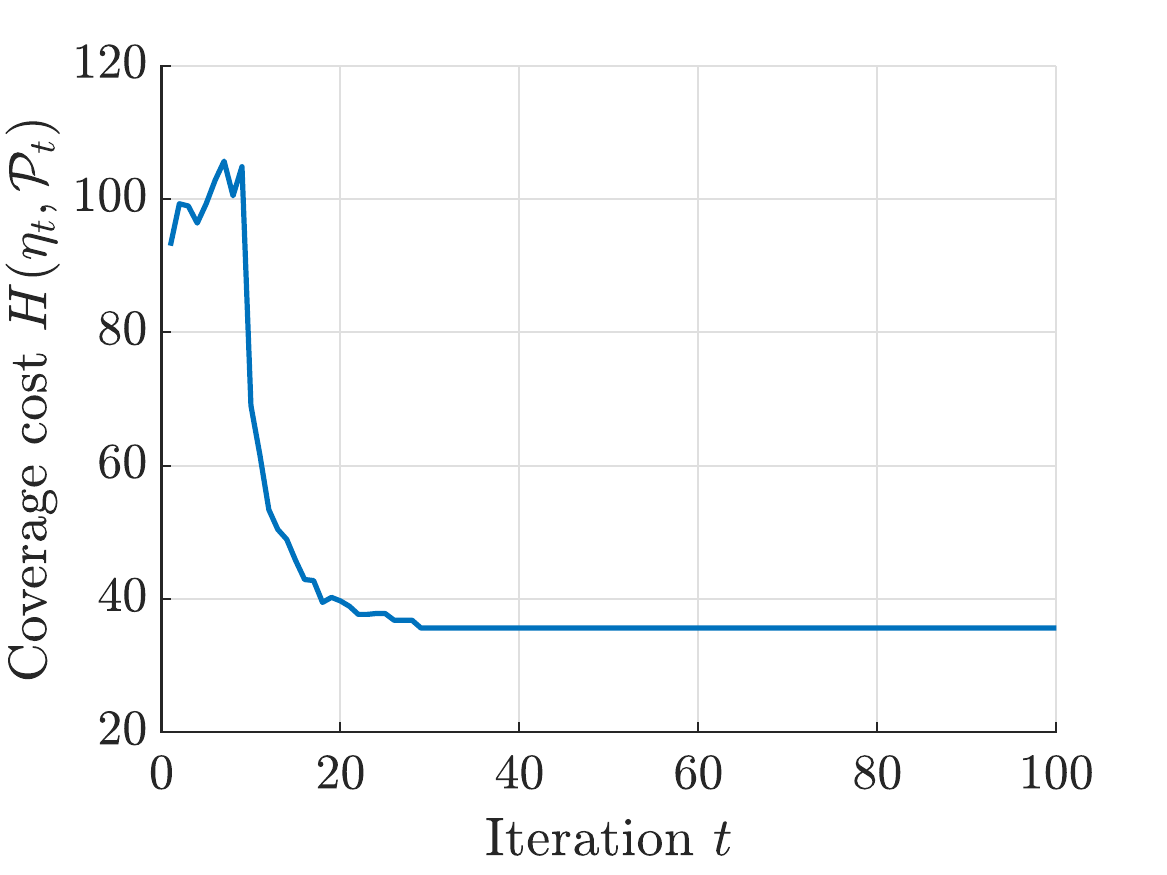}
\caption{Federated multitask coverage (FMTC) under known demand.}
\label{fig:federated-known}
\end{figure}

\begin{figure}[ht!]
\centering
\includegraphics[width=\linewidth]{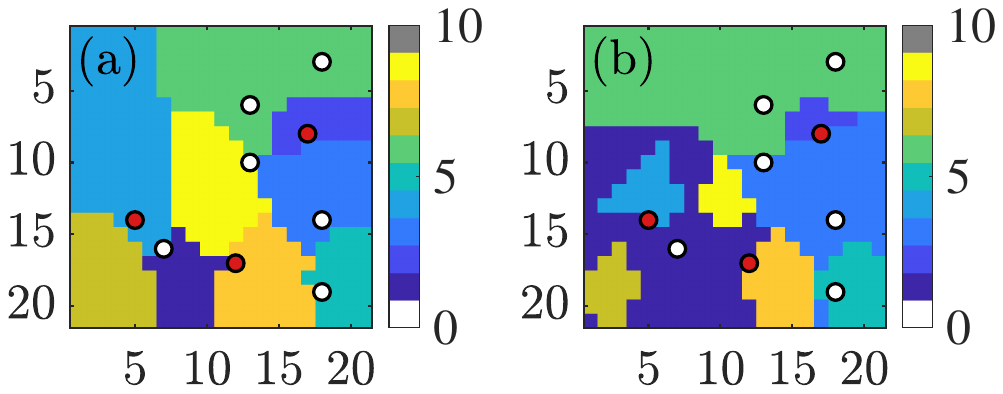}
\caption{Final deployment and multitask partition in the heterogeneous case. The red dots represent robots with superior Task 2 capability, and white dots represent other robots. The color of the partition is mapped to the robot identity with 0 and 10 indicating uncovered and multiply covered vertices, respectively. Robots with superior Task 2 capabilities have indices $\{1, 3, 6\}$. }
\label{fig:two-snapshot}
\end{figure}

The resulting deployment reflects both spatial demand intensity and robot-task heterogeneity. 
Robots with larger $a_{i1}$ concentrate near peaks of $\phi^1$, while robots with larger $a_{i2}$ distribute over vertices where $\phi^2$ dominates.

\noindent
\textbf{Single Task Learning and Coverage:} We now focus on a single task learning and coverage scenario: Task~1  with demand function $\phi^1$. We model the sensory field as a Gaussian process with a squared exponential kernel with $\sigma_v^2 = 1.0$ and $l = 0.18$. The measurement noise is taken as $\sigma = 0.2$. We compare the proposed DSMLC algorithm with a randomized multitask learning and coverage (RMLC) algorithm, which is an adaptation of the algorithm proposed in~\cite{Todescato2017}. DSMLC uses $\alpha =0.5$ and follows the epoch-based deterministic sequencing structure with robots communicating with the base station in a round-robin fashion. 

In the RMLC algorithm as well, the agents communicate with the base station in a round-robin fashion. At each communication round, robots transmit the measurements collected since the last contact, and the base station updates the GP and executes a single step of federated multitask coverage using the updated posterior mean as the demand function. At each time, each robot chooses to stay at or move to $\eta_i$ suggested by the base station with probability $1-p_i$, or to visit the most uncertain point in $\mc P^1_i$ and collect a measurement, where
\(
p_i = \frac{M_i}{M_i+\kappa},  \kappa=0.1,
\)
where $M_i$ is the maximum posterior variance within its assigned region.

Fig. ~\ref{fig:single-regret} shows cumulative regret for both algorithms. Both algorithms asymptotically converge to deployments consistent with the known-demand solution. DSMLC exhibits smaller regret due to coordinated sampling.


\begin{figure}[ht!]
\centering
\includegraphics[width=0.8\linewidth]{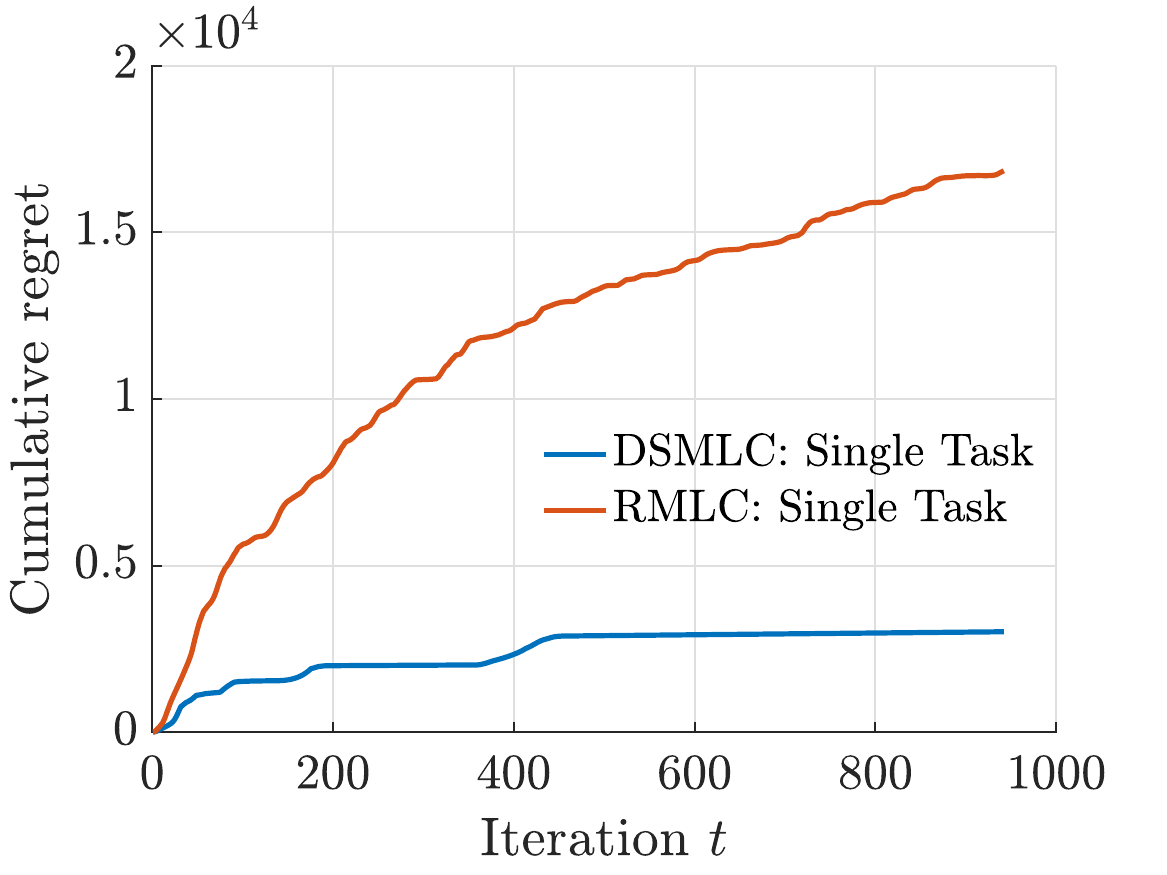}
\caption{Cumulative regret comparison in the single-task case.}
\label{fig:single-regret}
\end{figure}

\noindent
\textbf{Two Task Learning and Coverage:}
We next consider the full heterogeneous multitask setting.
The measurement noise parameter is taken as $\sigma = 0.2$, and the inter-task correlation is $0.65$. The most uncertain point in RMLC is selected by comparing the trace of the posterior covariance in  $P^1_i \union  P_i^2$. 
Cumulative regret is shown in Fig.~\ref{fig:two-regret}. The regret increases relative to the single-task case, reflecting the additional complexity of learning task-specific spatial structure under heterogeneity.

\begin{figure}[ht!]
\centering
\includegraphics[width=0.8\linewidth]{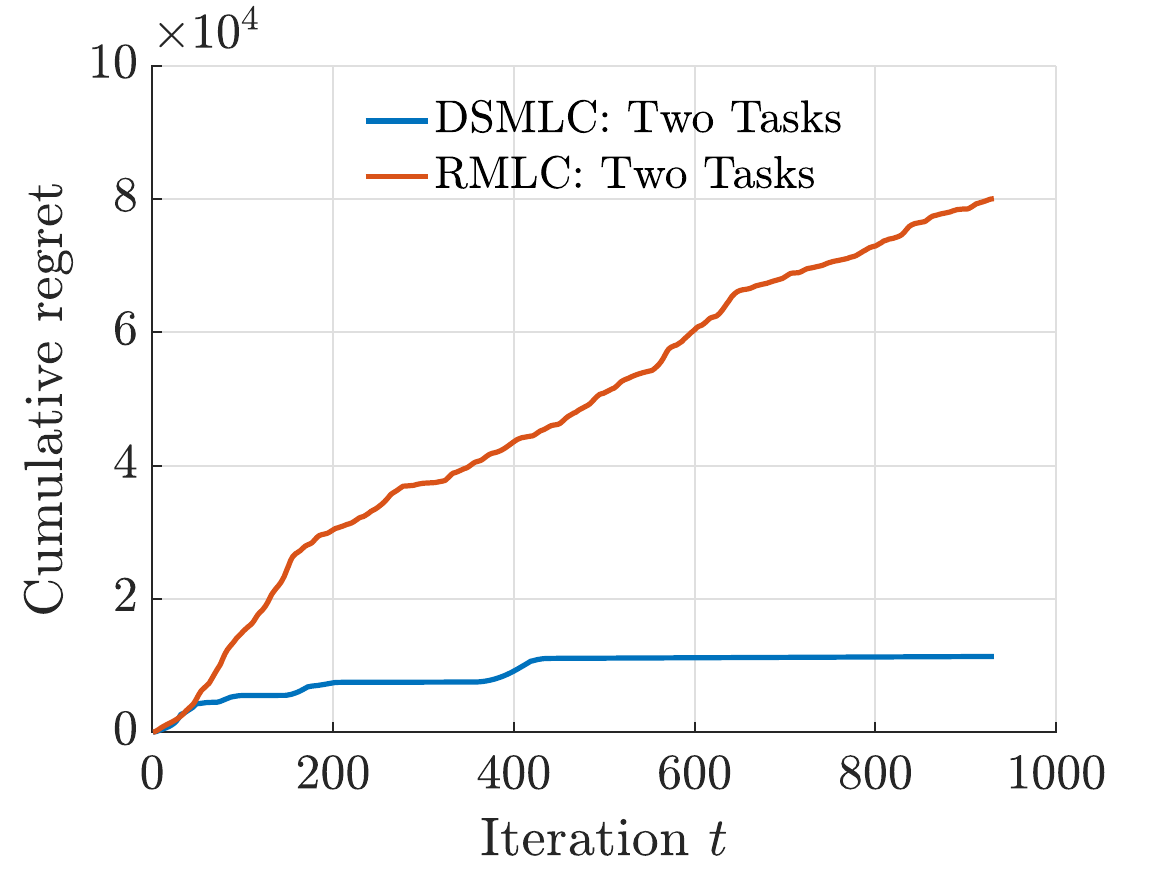}
\caption{Cumulative regret comparison in the heterogeneous two-task case.}
\label{fig:two-regret}
\end{figure}

While DSMLC achieves superior regret compared with RMLC in cases shown here. The performance is sensitive to the choice of $p_i$. Optimal choice of $p_i$ or the associated parameters in RMLC is an open area of investigation.


\section{Conclusions and Future Directions}
\label{sec:conclusions}
In this paper, we introduced a novel multitask coverage problem and designed algorithms to address it for both known and unknown sensory demands. We developed a federated multitask coverage algorithm for known demands and established its convergence properties.  We extended the algorithm to design an adaptive multitask coverage algorithm for unknown demands using a multitask Gaussian Process (GP) framework. We defined a new notion of multitask coverage regret and demonstrated that our adaptive algorithm achieves sublinear cumulative regret. We illustrated the empirical promise of our approach through numerical simulations.

Future research directions include extending the approach to settings where agents have unknown dynamics. Exploring nonstationary environments where the sensory field evolves over time is another promising avenue of research. Finally, studying the notion of social fairness within the proposed framework will also be an interesting direction.

\footnotesize 

\bibliographystyle{ieeetr}
\bibliography{coverage}

\end{document}